\setlist[enumerate]{leftmargin=.5in}
\setlist[itemize]{leftmargin=.5in}
\newtheorem{example}{Example}
\newtheorem{definition}{Definition}
\newtheorem{theorem}{Theorem}
\newtheorem{proof}{Proof}
\title{Predictive refinement methodology for compressed sensing imaging 
}
\author{Alfredo Nava-Tudela\thanks{Institute for Physical Science and Technology, University of Maryland, College Park, MD, USA (\href{mailto:ant@umd.edu}{ant@umd.edu}).}}
\begin{document}
\maketitle

\begin{abstract}
The weak-$\ell^p$ norm can be used to define a measure $s$ of sparsity. When we compute $s$ for the discrete cosine transform coefficients of a signal, the value of $s$ is related to the information content of said signal. We use this value of $s$ to define a reference-free index $\mathcal{E}$, called the sparsity index, that we can use to predict with high accuracy the quality of signal reconstruction in the setting of compressed sensing imaging. That way, when compressed sensing is framed in the context of sampling theory, we can use $\mathcal{E}$ to decide when to further partition the sampling space and increase the sampling rate to optimize the recovery of an image when we use compressed sensing techniques.
\end{abstract}



\section{Introduction} \label{sec: introduction}
In order to reproduce the voice of a singer who can sing up to a ``soprano C", or at a frequency of $W = 1046$ Hz, Claude Shannon \cite{shan1949} proved that we need to sample her voice once every $\frac{1}{2W}$ seconds. He named this number the Nyquist sampling rate for a signal of band $W$ Hz, i.e., a signal with frequencies no higher than $W$ Hz, after Harry Nyquist, who had ``pointed out the fundamental importance of the time interval $\frac{1}{2W}$ in connection with telegraphy."

Shannon notes that this result was known in other forms by the mathematician J.\ M.\ Whittaker \cite{whit1935}, but that otherwise had not appeared explicitly in the literature of communication theory. The idea must have been in the air since Nyquist \cite{nyqu1928}; Bennett \cite{benn1941}, in the steady state case; and Gabor \cite{gabo1946} had pointed out that approximately $2TW$ numbers are sufficient to capture a signal of band $W$ Hz that lasts for $T$ seconds.

Further in ``Communication in the presence of noise" \cite{shan1949}, published a year after his seminal ``A mathematical theory of communication" \cite{shan1948}, Shannon establishes a method to represent geometrically any communication system,  and explores the utility of mapping a sequence of samples of a band limited signal into a high dimensional vector space. And it is here where he makes the most interesting of all remarks, on page 13: ``[...] in the case of speech, the ear is insensitive to a certain amount of phase distortion. Messages differing only in the phases of their components [...] sound the same. This may have the effect of {\em reducing the number of essential dimensions in the message space}."

In other words, even if the dimension of the ambient vector space where we embed a representation of a signal is very high, we may come up with an equivalence class for which member points will have similar {\em information content as the original signal}, as far as the end user is concerned; and that equivalence class, in turn, will induce a low dimensional manifold in the vector space where similar messages can be mapped.

These ideas make it natural to frame the theory of compressed sensing \cite{CanTao2005,  CanRomTao2006,dono2006,cand2008,Mack2009} in the context of sampling and information theories. To see this, observe that compressed sensing makes it possible to reconstruct a signal, under certain circumstances, with fewer measurements than the otherwise required number of samples dictated by the Nyquist sampling rate. Moreover, even when the reconstruction is not exact, the error will be small.

In specific, compressed sensing deals with the problem of recovering a signal or message of interest $\mathbf{v} \in \mathbb{R}^n$, which we assume can be represented as $\mathbf{v} = \mathbf{A}\mathbf{x}$ for a matrix $\mathbf{A} \in \mathbb{R}^{n \times N}$, with $n \leq N$, from an incomplete set of linear measurements,
\begin{equation} \label{eq: compressed sensing}
\mathbf{b} = \Phi \mathbf{x}, 
\end{equation}
where $\mathbf{b} \in \mathbb{R}^m$ is the vector of measurements, $\mathbf{x} \in \mathbb{R}^N$ is the object to recover, and $\Phi \in \mathbb{R}^{m \times N}$ is the measurement matrix, with $m < n$ and $\Phi$ is a full rank matrix.\footnote{If $n = N$, we are in the setting of transform coding, where $\mathbf{A}$ represents a unitary transform, for example; and if $n < N$, we can talk of a dictionary or a frame representation of $\mathbf{v}$.} Given a measurement vector $\mathbf{b}$, \cref{eq: compressed sensing} represents an underdetermined system of linear equations, with an infinite number of solutions. However, if $\mathbf{x}$ has at most $k \leq m$ significant components, compared to the rest, we can recover it exactly, or very closely, by solving the constrained problem,
\begin{equation} \label{eq: sparse recovery}
\min_{\mathbf{x} \in \mathbb{R}^N} f(\mathbf{x}) \quad \text{subject to } \| \mathbf{b} - \Phi \mathbf{x}\|_2 = 0,
\end{equation}
where $f(\cdot) = \| \cdot \|_d$, and $d =0 \text{ or } 1$. Here $\| \mathbf{x} \|_0$ counts the number of nonzero entries of $\mathbf{x}$. If $\mathbf{x}^\star$ is a solution to \cref{eq: sparse recovery}, we then synthesize an approximate reconstruction of $\mathbf{v}$ by using $\mathbf{A} \mathbf{x}^\star$.

Note that since $m < n$, we have used fewer measurements than the number of coordinates of $\mathbf{v}$, in effect compressing the sensing, hence the name compressed sensing; possibly beating the Nyquist sampling rate; going from a large dimensional message space, $\mathbb{R}^n$, to a smaller dimensional measurement space, $\mathbb{R}^m$, in a manner that hopefully captures the {\em essence} of the signal of interest. Just like Shannon envisioned.

For all of this to work, we need to make precise the notion of what the ``significant components" of $\mathbf{x}$ are, notion which traditionally has translated into talking of {\em sparsity}. However, we show in \cref{ex: l_0 norm bad at measuring sparcity} that the commonly used notion of sparsity---the number of nonzero entries in a vector---is defective, and we propose instead in \cref{sec: sparsity} a refined notion $s$ of sparsity that extends the traditional meaning of the word as used in the compressed sensing and sparse representation literatures. The definition of $s$ is based on the weak-$\ell^p$ norm, which we define and study in \cref{sec: the weak-lp norm}. The weak-$\ell^p$ norm helps us define, for a given $0 < p \leq 1$, the sparsity function $s_p : \mathbb{R}^n \rightarrow \mathbb{R}$ and the sparsity relation $<_{s_p}$, which induces a strict partial order on $\mathbb{R}^n$. We show that, for a given vector $\mathbf{x} \in \mathbb{R}^n$, $s_p(\mathbf{x}) : \mathbb{R} \rightarrow \mathbb{R}$ is a convex function of $p$, and we use this fact to compute effectively $\inf_{p \in (0,1]} s_p(\mathbf{x})$, which we define as the sparsity $s(\mathbf{x})$ of $\mathbf{x}$. See \cref{sec: defining sparsity}.

In \cref{sec: unitary transforms and sparse representations} we study unitary transformations and the sparsity $s$, which we use to define sparsifying transforms and their properties, formalizing well known energy shifting properties of unitary transforms commonly used in compression, for example. This leads in \cref{sec: error analysis and sparsity} to the study of error analysis and sparsity $s$ when we truncate the signal representation $\mathbf{x}$ of a vector $\mathbf{v}$ under a sparsifying unitary transform $\mathbf{T}$. This is done in terms of the peak signal-to-noise ratio or PSNR, for which we find a lower bound in terms of $s$.

This error analysis and musings on information theoretical matters in \cref{sec: raionale for E} motivate the definition of the sparsity index $\mathcal{E}$ in \cref{sec: sparsity index}, which we use in the context of compressed sensing image reconstruction, by example of the single pixel camera, which is described in detail in \cref{sec: compressed sensing}: In \cref{sec: background} we provide background on the origin of the single pixel camera, in \cref{sec: experiment} we provide a physical realization and mathematical modeling of a single pixel camera, and how to go about obtaining an image from it both in an inefficient way, \cref{sec: naive sensing}, and the compressed sensing way, \cref{sec: compressed sensing details}. In \cref{sec: solving the single pixel camera problem} we show how to solve the single pixel camera compressed sensing problem with either the orthogonal matching pursuit algorithm (OMP), \cref{sec: omp}, or the more efficient and better basis pursuit algorithm (BP), \cref{sec: bp}, for which, in \cref{sec: convex minimization with constraints}, we provide the specific methods that we use to implement it. The characteristics of OMP help us tie in the use of the sparsity index $\mathcal{E}$ with the calculation of a lower bound of the PSNR of the various compressed sensing image reconstructions conducted in \cref{sec: discussion} with BP, given that the solutions obtained with OMP and BP are close. Our results show that we can predict the quality of the reconstruction of images with very good accuracy without knowledge of the original, i.e., we show that we have in the sparsity index $\mathcal{E}$ a reference-free tool to decide when to sample at a higher rate a given region to guarantee a minimum local PSNR reconstruction.

\section{Sparsity} \label{sec: sparsity}
In this section we define the weak-$\ell^p$ norm, go over some of its properties, and use it to redefine the notion of {\em sparsity}, which in common parlance refers to the counting of nonzero entries in a vector. We do this because we show with an example why the commonly used notion of sparsity is not fully satisfactory, and propose instead a new measure of sparsity that utilizes the weak-$\ell^p$ norm, mentioned as a measure of sparsity in \cite{BruDonEla2009}, and used in that capacity in, for example, \cite{CohDeVPetXu1999} and \cite{CanDon2002}. We then derive some properties of this measure of sparsity.

\subsection{The weak-\texorpdfstring{\boldmath$\ell^p$}{lp} norm and its properties} \label{sec: the weak-lp norm}
It is easy to see that given a vector $\mathbf{x} = (x_1, x_2, \ldots, x_n)^\text{T} \in \mathbb{R}^n$, there exists a unique vector $\mathbf{y} = (y_1, y_2, \ldots, y_n)^\text{T} \in \mathbb{R}^n$ satisfying the following two properties:
\begin{enumerate}
\item For all $i \in \{1, 2, \ldots, n\}$, there is a $j \in \{1, 2, \ldots, n\}$ such that $|x_i| = y_j$, and
\item For all $i, j \in \{1, 2, \ldots, n\}$ we have that $y_i \leq y_j \Leftrightarrow i \leq j$.
\end{enumerate}

These two properties naturally define the ordering operator $\Omega : \mathbb{R}^n \rightarrow \mathbb{R}^n$, which assigns to $\mathbf{x}$ its corresponding $\mathbf{y}$. We then write $\mathbf{y} = \Omega(\mathbf{x})$, and say that $\mathbf{y}$ is the ordering of $\mathbf{x}$.

\begin{definition}[Weak-$\ell^p$ norm] \label{def: weaklp}
Let $\mathbf{x} \in \mathbb{R}^n$ and $p > 0$. We define the weak-$\ell^p$ norm of vector $\mathbf{x}$ as the number
\begin{equation*} \label{eq: weaklp}
\| \mathbf{x} \|_{w\ell^p} = \left( \sup_{\epsilon > 0} N(\epsilon,\mathbf{x}) \epsilon^p \right)^\frac{1}{p},
\end{equation*}
where $N(\epsilon,\mathbf{x}) = \#\{j : |x_j| > \epsilon\}$.
\end{definition}
We are interested in the weak-$\ell^p$ norm because for values of $p \in (0,1]$, for a given vector $\mathbf{x}$, the quantity $\|\mathbf{x}\|_{w\ell^p}^p$ can be used as a measure of {\em sparsity} of $\mathbf{x}$. We elaborate on this later on. First, we address how to effectively compute $\|\cdot\|_{w\ell^p}$.

\begin{theorem} \label{theo: computation of weaklp}
Given a vector $\mathbf{x} \in \mathbb{R}^n$, and $p>0$, we have that
\begin{equation} \label{eq: weaklp computation}
\|\mathbf{x}\|_{w\ell^p}^p = \max_i \#\{j : y_j \geq y_i\} y_i^p,
\end{equation}
where $\mathbf{y} = (y_1, y_2, \ldots, y_n)^\text{T} = \Omega(\mathbf{x})$. We define the index $i_\star = i_\star(\mathbf{x}, p)$ as the smallest index where the right hand side of \cref{eq: weaklp computation} reaches its maximum.
\end{theorem}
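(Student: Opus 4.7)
The plan is to analyze $N(\epsilon, \mathbf{x}) \epsilon^p$ as a function of $\epsilon > 0$ and pinpoint where its supremum is attained. Since the multisets $\{|x_1|,\ldots,|x_n|\}$ and $\{y_1,\ldots,y_n\}$ coincide by construction of $\Omega$, I may rewrite $N(\epsilon, \mathbf{x}) = \#\{j : y_j > \epsilon\}$ and carry out the whole analysis in terms of the ordered vector $\mathbf{y}$.

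The key observation is that $N(\cdot, \mathbf{x})$ is a non-increasing, right-continuous step function of $\epsilon$ whose jump discontinuities occur precisely at the distinct nonzero values taken by the $y_j$'s. On any open interval of $\epsilon$-values that avoids those jump points, $N$ is constant while $\epsilon \mapsto \epsilon^p$ is strictly increasing, so $N(\epsilon, \mathbf{x})\, \epsilon^p$ is strictly increasing on that interval; its supremum there is therefore realized as a left limit at the right endpoint.

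Next I would compute those left limits explicitly. As $\epsilon \to y_i^{-}$ with $y_i > 0$, the strict inequality in the definition of $N$ relaxes in the limit, yielding
\[
\lim_{\epsilon \to y_i^{-}} N(\epsilon, \mathbf{x})\, \epsilon^p = \#\{j : y_j \geq y_i\}\, y_i^p.
\]
Taking the supremum over all $\epsilon > 0$ therefore amounts to taking the maximum of these quantities over the distinct nonzero values among the $y_i$'s. Indices $i$ with $y_i = 0$ contribute $\#\{j : y_j \geq 0\} \cdot 0^p = 0$ to the right-hand side of \cref{eq: weaklp computation}, so the maximum can harmlessly be extended to all $i \in \{1,\ldots,n\}$, and \cref{eq: weaklp computation} follows. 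Defining $i_\star$ as the smallest maximizing index is purely a tie-breaking convention and needs no further argument.

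The only real obstacle is careful bookkeeping around repeated values among the $y_j$'s and around possibly vanishing entries. Both are handled by keeping track of the distinction between the strict inequality in the definition of $N(\epsilon, \mathbf{x})$ and the non-strict inequality $y_j \geq y_i$ appearing in \cref{eq: weaklp computation}, as the left-limit computation above makes explicit.
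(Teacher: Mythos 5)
Your proposal is correct and takes essentially the same route as the paper: both reduce to the ordered vector $\mathbf{y}$, split $(0,\infty)$ into the intervals on which $N(\epsilon,\mathbf{x})$ is constant, and identify the supremum of $N(\epsilon,\mathbf{x})\,\epsilon^p$ on each piece as the left limit $\#\{j : y_j \geq y_i\}\,y_i^p$ at the corresponding jump point, with the zero entries (and the trivial case $\mathbf{x}=\mathbf{0}$) contributing nothing. No gaps worth flagging.
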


\begin{proof}
The statement is trivially true for $\mathbf{x} = \mathbf{0}$. Assume then that $\mathbf{x}$ is a nonzero vector with corresponding ordering $\mathbf{y} = \Omega(\mathbf{x})$. First observe that, for a given $\epsilon > 0$, the order in which we count the number of entries $x_j$ in $\mathbf{x}$ that are greater in absolute value than $\epsilon$, does not depend on said order. Therefore, for a given $\epsilon > 0$, we have that $N(\epsilon,\mathbf{x}) = \#\{j : |x_j| > \epsilon\} = \#\{j : y_j > \epsilon\} = N(\epsilon,\mathbf{y})$.

Since $\mathbf{x} \neq \mathbf{0}$, there is an integer $i \in \{1, 2, \ldots, n\}$ such that $y_i > 0$. Let $i_0$ be the smallest of such integers. Consider the following partition $\cal{P}$ of $(0, +\infty) = (0, y_{i_0}) \cup [y_{i_0}, y_{i_0 + 1}) \cup \ldots \cup [y_n, \infty)$. We compute the supremum of $N(\epsilon,\mathbf{y})\epsilon^p$ over each of the intervals defining $\cal{P}$. For $\epsilon \in (0,y_{i_0}) = I_0$, we have that $N(\epsilon, \mathbf{y}) = n - i_0 + 1$, and since raising a number to the $p^\text{th}$ power is a monotonically increasing operation, we clearly have that $\sup_{\epsilon \in I_0} N(\epsilon,\mathbf{y})\epsilon^p = (n - i_0 +1) y_{i_0}^p = \#\{j : y_j \geq y_{i_0}\}y_{i_0}^p$. Similarly, for $\epsilon \in [y_{i_0 + k}, y_{i_0 + k +1}) = I_{k+1}$, we have that $\sup_{\epsilon \in I_{k+1}} N(\epsilon,\mathbf{y})\epsilon^p = \#\{j : y_j \geq y_{i_0+k+1}\}y_{i_0+k+1}^p$. Finally, for $\epsilon \in [y_n, \infty) = I_\infty$, we have that $N(\epsilon,\mathbf{y}) = 0$, and therefore $\sup_{\epsilon \in I_\infty} N(\epsilon,\mathbf{y})\epsilon^p = 0$. The result follows from observing that the supremum of $N(\epsilon,\mathbf{y})\epsilon^p$ over $(0,\infty)$ is the maximum of the supremums of $N(\epsilon,\mathbf{y})\epsilon^p$ over each and all of the intervals $I_0, I_1, \ldots, I_{n-i_0}, I_\infty$.
\end{proof}

We state without proof the following properties of the weak-$\ell^p$ norm, derived from \cref{theo: computation of weaklp}.
\begin{theorem} \label{theo: properties of weak-lp norm}
Let $\mathbf{x} \in \mathbb{R}^n$, $\alpha \in \mathbb{R}$, and $p>0$. Then
\mbox{}
\begin{enumerate}
\item $\| \mathbf{x} \|_{w\ell^p} \geq 0$.
\item $\| \mathbf{x} \|_{w\ell^p} = 0$ if and only if $\mathbf{x} = \mathbf{0}$.
\item $\| \alpha \mathbf{x} \|_{w\ell^p} = |\alpha| \| \mathbf{x} \|_{w\ell^p}$.
\item The weak-$\ell^p$ norm does not satisfy the triangle inequality.
\item $\| \mathbf{x} \|_{w\ell^p} \leq \| \mathbf{x} \|_p$, where $\| \mathbf{x} \|_p = (\sum_i |x_i|^p)^\frac{1}{p}$ is the $\ell^p$-norm.
\end{enumerate}
\end{theorem}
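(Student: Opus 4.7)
My strategy for items 1, 2, 3, and 5 is to read each one straight off the closed-form expression $\|\mathbf{x}\|_{w\ell^p}^p = \max_i \#\{j : y_j \geq y_i\}\, y_i^p$ given by \cref{theo: computation of weaklp}, where $\mathbf{y} = \Omega(\mathbf{x})$ has nonnegative, nondecreasing entries. Item 1 is then just the statement that a maximum of products of nonnegative integers with $y_i^p \geq 0$ is nonnegative. The ``$\Leftarrow$'' direction of item 2 is immediate since $\mathbf{x} = \mathbf{0}$ forces every $y_i = 0$; for the converse, if $\mathbf{x} \neq \mathbf{0}$ then $y_n > 0$ and the single index $i = n$ already contributes $y_n^p > 0$ to the maximum. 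For item 3 the key observation is that $\Omega(\alpha \mathbf{x}) = |\alpha|\,\Omega(\mathbf{x})$, which holds because sorting absolute values is unaffected by an overall scaling by the nonnegative factor $|\alpha|$; substituting this into the formula lets $|\alpha|^p$ be pulled out of every term in the maximum.

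For item 5 I would bound each candidate in the maximum separately. The chain
\[
\#\{j : y_j \geq y_i\}\, y_i^p \;\leq\; \sum_{j : y_j \geq y_i} y_j^p \;\leq\; \sum_{j=1}^n y_j^p \;=\; \|\mathbf{x}\|_p^p
\]
holds for every $i$, so taking the maximum over $i$ on the left and then $p$-th roots yields $\|\mathbf{x}\|_{w\ell^p} \leq \|\mathbf{x}\|_p$.

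The only item demanding real thought is item 4, for which I need an explicit counterexample. The plan is to choose vectors in $\mathbb{R}^2$ whose two candidates in the maximum of \cref{theo: computation of weaklp} are equal (so that the individual weak-$\ell^p$ norms are as small as the data allow) and whose sum is a constant vector (which makes the weak-$\ell^p$ norm of the sum as large as its largest entry permits). Taking $\mathbf{x} = (1,\, 2^{-1/p})^\text{T}$ and $\mathbf{y} = (2^{-1/p},\, 1)^\text{T}$, the formula gives $\|\mathbf{x}\|_{w\ell^p} = \|\mathbf{y}\|_{w\ell^p} = 1$, whereas $\mathbf{x}+\mathbf{y} = (1+2^{-1/p})(1,1)^\text{T}$ has $\|\mathbf{x}+\mathbf{y}\|_{w\ell^p} = 2^{1/p}(1+2^{-1/p}) = 1 + 2^{1/p} > 2$ for every $p > 0$. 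The main obstacle is simply identifying this single clean example that handles all $p > 0$ uniformly; once it is written down, the verification reduces to two applications of \cref{theo: computation of weaklp}.
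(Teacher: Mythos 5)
Your proof is correct, and it follows exactly the route the paper intends: the paper states \cref{theo: properties of weak-lp norm} without proof, remarking only that the properties are ``derived from \cref{theo: computation of weaklp}'', and your arguments for items 1, 2, 3, and 5 are precisely that derivation, each step reading correctly off the closed-form maximum. Your counterexample for item 4 also checks out, since $\|\mathbf{x}\|_{w\ell^p} = \|\mathbf{y}\|_{w\ell^p} = 1$ while $\|\mathbf{x}+\mathbf{y}\|_{w\ell^p} = 1 + 2^{1/p} > 2$ for every $p>0$, so one example handles all $p$ uniformly (and padding with zeros extends it to any $n \geq 2$, the triangle inequality being trivially true when $n=1$).
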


Therefore, the weak-$\ell^p$ norm, is not a true norm, but almost. It is a quasi norm, but for simplicity we will refer to it as a ``norm". We explore and get acquainted with two more properties of the weak-$\ell^p$ norm that will be relevant later on.

From the result of \cref{theo: computation of weaklp}, we observe that the $p^\text{th}$ power of the weak-$\ell^p$ norm of a vector $\mathbf{x} \in \mathbb{R}^n$ corresponds to the largest area of a rectangle of width $\#\{j : y_j \geq y_i\} = n - i + 1$, and height $y_i^p$, where $\mathbf{y} = \Omega(\mathbf{x})$. Recall that in \cref{theo: computation of weaklp} we defined $i_\star$ to be the smallest index for which this maximal area is achieved since we will use $i_\star$ often. For a graphic representation of this concept, see \cref{fig: omega to the pth power,fig: wlp to the p}.

Note that for any value of $p>0$, for a given $i \in \{1, 2, \ldots, n\}$, $\#\{j : y_j \geq y_i\} = n - i + 1$; while $\lim_{p \rightarrow 0^+} y_i^p$ tends to either 1 or 0, depending on whether $y_i \neq 0$ or $y_i = 0$, respectively, i.e., $y_i^p$ tends to the characteristic function $\chi_{\mathbb{R}^*}$ as $p$ goes to zero. Here $\mathbb{R}^* = \mathbb{R} \setminus \{0\}$. It follows that
\begin{equation*}
\lim_{p \rightarrow 0^+} \#\{j : y_j \geq y_i\} y_i^p =
\begin{cases}
n - i + 1, &\text{if } y_i \neq 0, \\
0, &\text{if } y_i = 0.
\end{cases}
\end{equation*}
We conclude from the previous two paragraphs that
\begin{equation} \label{eq: zero norm as a limit}
\lim_{p \rightarrow 0^+} \|\mathbf{x}\|_{w\ell^p}^p = \|\mathbf{x}\|_0 := \#\{ j : x_j \neq 0\}.
\end{equation}
Hence, in the case when $p \rightarrow 0^+$, the weak-$\ell^p$ norm tends to the $\ell_0$-norm, which counts the nonzero entries of a vector, as defined in \cref{eq: zero norm as a limit}. Note that the $\ell_0$-norm is not a norm either, since $\|\alpha \mathbf{x}\|_0 \neq |\alpha| \|\mathbf{x}\|_0$ for $\alpha \in \mathbb{R}^*$ when $\mathbf{x} \neq \mathbf{0}$, but it is commonly called a ``norm" nonetheless.

However, the $\ell_0$-norm is not nuanced at all when we are trying to measure {\em sparsity}, usually defined as the count of the nonzero entries of a vector, in cases where a vector $\mathbf{x} \in \mathbb{R}^n$ has relatively few entries that are considerably larger than the rest in absolute value, a circumstance which we would like to distinguish for reasons that will become clear later on. With this in mind, we propose a new definition and measure of sparsity next.

\begin{figure}[htb]
\centering
\includegraphics[width=0.75\textwidth]{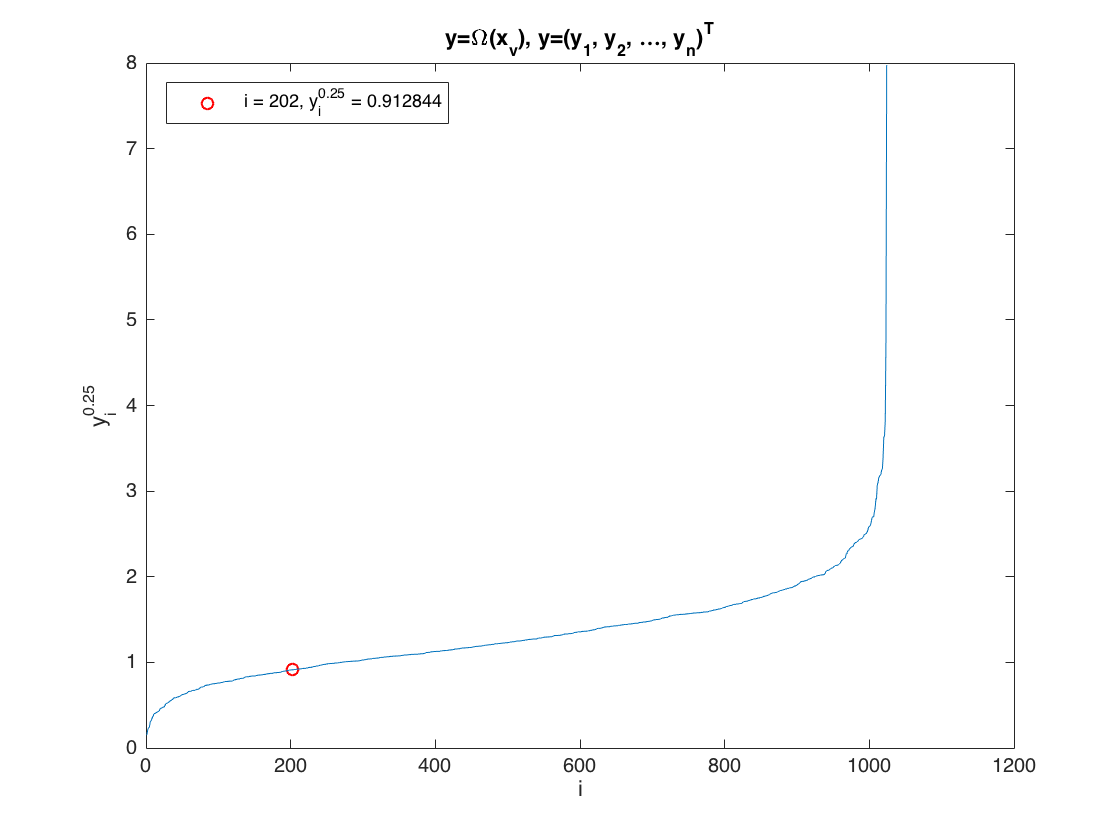}
\caption{Graph of the $p^\text{th}$ power of the entries $y_i$ of $\mathbf{y} = \Omega(\mathbf{x})$ versus $i$. The red circle marks the point $(i_\star,y_{i_\star}^p)$, for $p=\frac{1}{4}$, in this case. See \cref{theo: computation of weaklp} for the definition of $i_\star$.}
\label{fig: omega to the pth power}
\end{figure}

\begin{figure}[htb]
\centering
\includegraphics[width=0.75\textwidth]{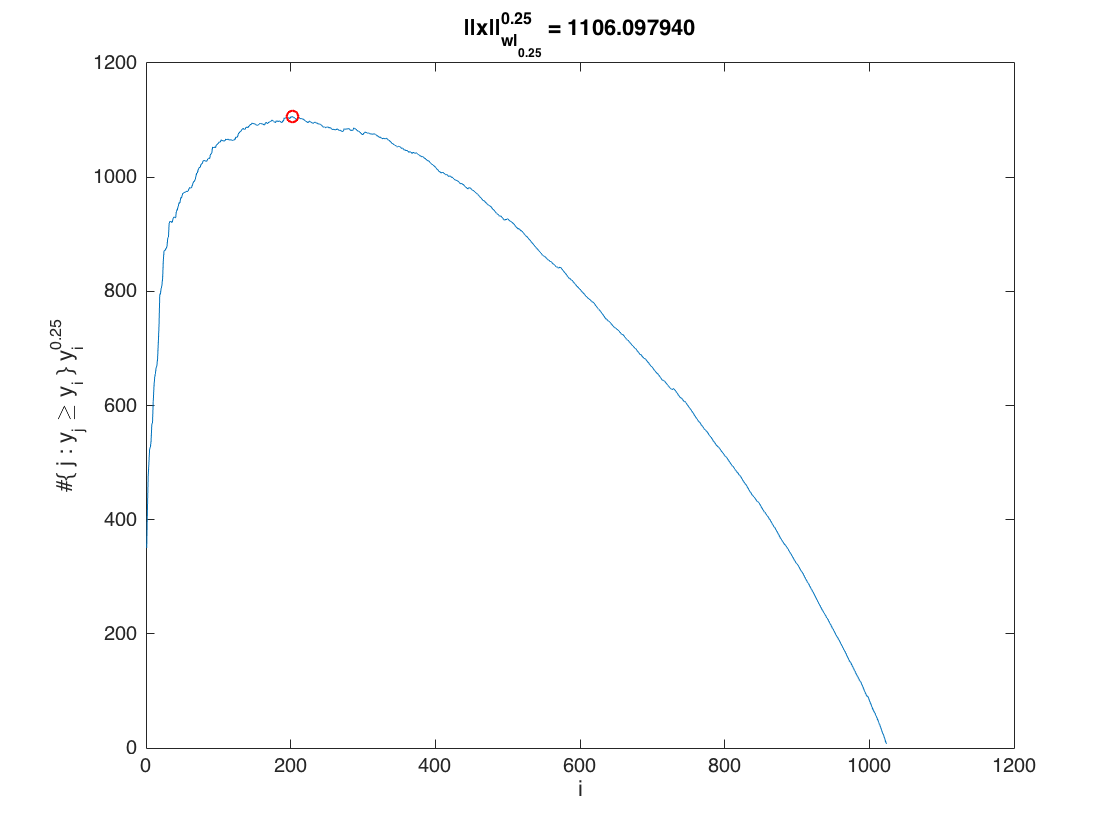}
\caption{Graph of $\#\{j : y_j \geq y_i\} y_i^p$ versus $i$. The red circle marks the point $(i_\star,\|\mathbf{x}\|_{w\ell^p}^p)$, for $p=\frac{1}{4}$. Recall that $\|\mathbf{x}\|_{w\ell^p}^p = \#\{j : y_j \geq y_{i_\star}\} y_{i_\star}^p$. This graph corresponds to the same vector $\mathbf{x}$ used in \cref{fig: omega to the pth power}.}
\label{fig: wlp to the p}
\end{figure}

\subsection{Defining and measuring sparsity} \label{sec: defining sparsity}
In common parlance, as we mentioned in \cref{sec: the weak-lp norm}, we say that a vector $\mathbf{x} = (x_1, x_2, \ldots, x_n)^\text{T} \in \mathbb{R}^n$ is sparse if its $\ell_0$-norm is smaller than $n$. In other words, 
\begin{equation} \label{eq: defining sparsity via ell_0 norm}
\|\mathbf{x}\|_0 = \#\{ j : x_j \neq 0\} < n.
\end{equation}
As argued above, though, this measure of sparsity will not distinguish the following two vectors in $\mathbb{R}^5$ as radically different:
\begin{example} \label{ex: l_0 norm bad at measuring sparcity}
Consider $\mathbf{x}_1=(1, 1, 1, 1, 0)^\text{T}$ and $\mathbf{x}_2=(2, 10^{-16}, 10^{-16}, 10^{-16}, 0)^\text{T}$. From the $\ell_0$-norm point of view, they are both sparse, moreover, their $\ell_0$-norms are equal, yet, most of the entries of $\mathbf{x}_1$ are 1, while most of the entries of $\mathbf{x}_2$ are practically 0.
\end{example}

Clearly, the notion of sparsity defined by \cref{eq: defining sparsity via ell_0 norm} cannot distinguish the very different nature of these two vectors, $\mathbf{x}_1$ and $\mathbf{x}_2$.

Note that in the example above, we deliberately chose both vectors to have approximately equal energy, if we define the energy of a vector $\mathbf{x}$ as $\|\mathbf{x}\|_2^2 = \sum_i x_i^2$. With these observations in hand, we put forth the following definitions.

\begin{definition}[Sparsity $s_p$ and sparsity relation $<_{s_p}$] \label{def: s_p and <_{s_p}}
Let $p \in (0,1]$. Consider the set $X = \mathbb{R}^n$, and define the  binary relation $R \subset X \times X$ as follows:
\begin{equation*}
\text{For all } \mathbf{x}_1, \mathbf{x}_2 \in X, \text{ if } \|\mathbf{x}_1\|_2 = \|\mathbf{x}_2\|_2 \text{ and } \|\mathbf{x}_1\|_{w\ell^p} < \|\mathbf{x}_2\|_{w\ell^p}, \text{ then } (\mathbf{x}_1,\mathbf{x}_2) \in R.
\end{equation*}
We call $R$ the sparsity relation (for $\mathbb{R}^n$ of order $p$), and will write for simplicity $\mathbf{x}_1 <_{s_p} \mathbf{x}_2$ whenever $(\mathbf{x}_1,\mathbf{x}_2) \in R$. If $\mathbf{x}_1 <_{s_p} \mathbf{x}_2$, we say that $\mathbf{x}_1$ is sparser than $\mathbf{x}_2$. We say that $\mathbf{x} \in X$ has sparsity of order $p$ equal to $s_p(\mathbf{x}) := \|\mathbf{x}\|_{w\ell^p}^p$, or simply, that $\mathbf{x}$ has sparsity $s_p(\mathbf{x})$. 
\end{definition}

\begin{theorem} \label{theo: R is a strict partially ordered set}
Let $p \in (0,1]$, then $(\mathbb{R}^n,<_{s_p})$ is a strict partially ordered set.
\end{theorem}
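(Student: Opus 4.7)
The plan is to verify the three defining properties of a strict partial order---irreflexivity, asymmetry, and transitivity---directly from the definition of $<_{s_p}$, since that relation bundles together an equality of $\ell^2$-norms with a strict inequality of weak-$\ell^p$ norms, and each property will reduce immediately to the corresponding property of $=$ and $<$ on $\mathbb{R}$.

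First I would handle irreflexivity: for any $\mathbf{x} \in \mathbb{R}^n$, the condition $\mathbf{x} <_{s_p} \mathbf{x}$ would require $\|\mathbf{x}\|_{w\ell^p} < \|\mathbf{x}\|_{w\ell^p}$, which fails because $<$ is irreflexive on $\mathbb{R}$. Next, asymmetry: if $\mathbf{x}_1 <_{s_p} \mathbf{x}_2$ then $\|\mathbf{x}_1\|_{w\ell^p} < \|\mathbf{x}_2\|_{w\ell^p}$, which directly precludes $\|\mathbf{x}_2\|_{w\ell^p} < \|\mathbf{x}_1\|_{w\ell^p}$ and hence $\mathbf{x}_2 \not<_{s_p} \mathbf{x}_1$. (Alternatively one can remark that asymmetry is implied by irreflexivity plus transitivity, so it need not be proved separately.)

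For transitivity, suppose $\mathbf{x}_1 <_{s_p} \mathbf{x}_2$ and $\mathbf{x}_2 <_{s_p} \mathbf{x}_3$. Unpacking the definition, $\|\mathbf{x}_1\|_2 = \|\mathbf{x}_2\|_2$ and $\|\mathbf{x}_2\|_2 = \|\mathbf{x}_3\|_2$, so by transitivity of equality $\|\mathbf{x}_1\|_2 = \|\mathbf{x}_3\|_2$; and $\|\mathbf{x}_1\|_{w\ell^p} < \|\mathbf{x}_2\|_{w\ell^p} < \|\mathbf{x}_3\|_{w\ell^p}$, so by transitivity of strict inequality on the reals $\|\mathbf{x}_1\|_{w\ell^p} < \|\mathbf{x}_3\|_{w\ell^p}$. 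Both clauses of the defining condition hold, so $\mathbf{x}_1 <_{s_p} \mathbf{x}_3$.

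There is effectively no obstacle here: the proof is a pure bookkeeping exercise that inherits all the needed structure from $(\mathbb{R},=,<)$. The only subtle point worth flagging is why the order is only \emph{partial}: two vectors of different $\ell^2$-norm are automatically incomparable (neither is sparser than the other under this definition), so one should not expect totality. I would close with a one-line remark noting this, since it motivates the ``partial'' qualifier and foreshadows the later move to the scalar invariant $s(\mathbf{x}) = \inf_{p \in (0,1]} s_p(\mathbf{x})$, which by contrast totally orders $\mathbb{R}^n$ via the usual order on $\mathbb{R}$.
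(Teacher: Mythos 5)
Your proof is correct and follows essentially the same route as the paper's: irreflexivity and transitivity are read off directly from the irreflexivity of $<$ and the transitivity of $=$ and $<$ on $\mathbb{R}$ (the asymmetry check you add is redundant, as you yourself note, since it follows from the other two). One small caution about your closing aside: the scalar $s(\mathbf{x})$ induces only a total \emph{preorder} on $\mathbb{R}^n$, not a total order, since distinct vectors can share the same sparsity value.
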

\begin{proof}Let $p \in (0,1]$. For all $\mathbf{x} \in \mathbb{R}^n$, we have that $<_{s_p}$ is trivially irreflexive, i.e., $\mathbf{x} \not<_{s_p} \mathbf{x}$, since $\|\mathbf{x}\|_{w\ell^p} = \|\mathbf{x}\|_{w\ell^p}$, hence $\mathbf{x} \not<_{s_p} \mathbf{x}$. Let $\mathbf{x}_1, \mathbf{x}_2, \mathbf{x}_3 \in \mathbb{R}^n$ and assume that $\mathbf{x}_1 <_{s_p} \mathbf{x}_2$ and $\mathbf{x}_2 <_{s_p} \mathbf{x}_3$. Then, by definition, we must have that $\|\mathbf{x}_1\|_2 = \|\mathbf{x}_2\|_2$ and $\|\mathbf{x}_2\|_2 = \|\mathbf{x}_3\|_2$, as well as $\|\mathbf{x}_1\|_{w\ell^p} < \|\mathbf{x}_2\|_{w\ell^p}$ and $\|\mathbf{x}_2\|_{w\ell^p} < \|\mathbf{x}_3\|_{w\ell^p}$, since both $=$ and $<$ are transitive in $\mathbb{R}$, we have that $\|\mathbf{x}_1\|_2 = \|\mathbf{x}_3\|_2$ and $\|\mathbf{x}_1\|_{w\ell^p} < \|\mathbf{x}_3\|_{w\ell^p}$, and therefore $\mathbf{x}_1 <_{s_p} \mathbf{x}_3$, i.e., $<_{s_p}$ is transitive.
\end{proof}

When we have a partially ordered set, e.g., $(\mathbb{R}^n,<_{s_p})$, we are usually interested in knowing if there are maximal or minimal elements in it with respect to its ordering. Assuming the Axiom of Choice in the form of Zorn's lemma---which states that a partially ordered set in which every chain (i.e., every totally ordered subset), has an upper (lower) bound, necessarily contains at least one maximal (minimal) element---we would then set to find upper (lower) bounds in $\mathbb{R}^n$ for each energy level to conclude that there exist maximal (minimal) elements in $\mathbb{R}^n$ with respect to the partial order $<_{s_p}$. 
We leave the task of establishing the existence of maximal or minimal elements in $(\mathbb{R}^n,<_{s_p})$ for another occasion, since this departs from the focus of our endeavors.

Note that the proof of \cref{theo: R is a strict partially ordered set} does not use anywhere that $p \in (0,1]$ and is, in fact, true for any $p > 0$. However, given the aforementioned observations stemming from \cref{eq: zero norm as a limit} and \cref{eq: defining sparsity via ell_0 norm}, it is clear that measuring sparsity with $s_p$ and comparing the sparsity of two vectors with the sparsity relation $<_{s_p}$, are meaningful and sensible concepts only when $p \in (0,1]$. Therefore, going forward, we will assume that $p \in (0,1]$, unless otherwise noted.


\begin{theorem}[Convexity of $s_p$ as a function of $p$] \label{theo: convexity of s_p}
For all $\mathbf{x} \in \mathbb{R}^n$, the function $f_\mathbf{x} : (0,1] \rightarrow \mathbb{R}$ that maps $p \mapsto s_p(\mathbf{x})$ is a convex function. Moreover, if $\mathbf{y} = (y_1, y_2, \ldots, y_n)^\text{T} = \Omega(\mathbf{x})$ is the ordering of $\mathbf{x}$ and $\mathcal{I}_\star(\mathbf{x}) = \{ i_\star(\mathbf{x},p) : p \in (0,1] \}$  is such that $\{ y_i : i \in \mathcal{I}_\star(\mathbf{x})\} \cap \{0,1\} = \emptyset$, then $f_\mathbf{x}$ is strictly convex. (See \cref{theo: computation of weaklp} for the definition of $i_\star$.)
\end{theorem}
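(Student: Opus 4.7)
The plan is to use \cref{theo: computation of weaklp} to represent $f_\mathbf{x}$ as a pointwise maximum of elementary functions whose convexity in $p$ is transparent, and then appeal to the stability of (strict) convexity under pointwise maxima.

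First I would fix $\mathbf{y}=\Omega(\mathbf{x})$, set $c_i = \#\{j : y_j \geq y_i\} = n-i+1$ (which does not depend on $p$), and define, for each $i \in \{1,\dots,n\}$, the function $g_i : (0,1] \to \mathbb{R}$ by $g_i(p) = c_i\, y_i^p$. By \cref{theo: computation of weaklp},
\begin{equation*}
f_\mathbf{x}(p) = s_p(\mathbf{x}) = \|\mathbf{x}\|_{w\ell^p}^p = \max_{1 \leq i \leq n} g_i(p).
\end{equation*}

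Next I would verify that each $g_i$ is convex on $(0,1]$. If $y_i = 0$ then $g_i \equiv 0$, trivially convex. If $y_i > 0$ then $g_i(p) = c_i e^{p \log y_i}$, which is an exponential in $p$ and therefore convex (and, in fact, strictly convex precisely when $\log y_i \neq 0$, i.e., when $y_i \neq 1$). Since the pointwise maximum of a finite family of convex functions is convex, $f_\mathbf{x}$ is convex on $(0,1]$; this settles the first assertion.

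For strict convexity under the additional hypothesis $\{y_i : i \in \mathcal{I}_\star(\mathbf{x})\} \cap \{0,1\} = \emptyset$, I would pick arbitrary distinct $p_1,p_2 \in (0,1]$ and $\lambda \in (0,1)$, set $p_\lambda = \lambda p_1 + (1-\lambda) p_2$, and let $i^\star = i_\star(\mathbf{x},p_\lambda) \in \mathcal{I}_\star(\mathbf{x})$. The hypothesis forces $y_{i^\star} \notin \{0,1\}$, so $g_{i^\star}(p) = c_{i^\star} e^{p \log y_{i^\star}}$ with $\log y_{i^\star} \neq 0$ is strictly convex. Using $f_\mathbf{x}(p_\lambda) = g_{i^\star}(p_\lambda)$ together with $g_{i^\star}(p_k) \leq f_\mathbf{x}(p_k)$ for $k=1,2$, I would chain
\begin{equation*}
f_\mathbf{x}(p_\lambda) = g_{i^\star}(p_\lambda) < \lambda g_{i^\star}(p_1) + (1-\lambda) g_{i^\star}(p_2) \leq \lambda f_\mathbf{x}(p_1) + (1-\lambda) f_\mathbf{x}(p_2),
\end{equation*}
giving the strict inequality.

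No step is genuinely hard; the only subtle point is the role of the hypothesis on $\mathcal{I}_\star(\mathbf{x})$. Strict convexity does not pass through $\max$ automatically, so one needs to exhibit, at the point $p_\lambda$ where convexity is to be tested, an active member of the max family that is itself strictly convex at $p_\lambda$. The condition excluding $y_i \in \{0,1\}$ for $i \in \mathcal{I}_\star(\mathbf{x})$ is exactly what rules out the two degenerate cases (constant $0$ and constant $c_i$) and guarantees that $g_{i_\star(\mathbf{x},p_\lambda)}$ is strictly convex at $p_\lambda$, which is the cleanest way to close the argument.
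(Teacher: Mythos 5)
Your proposal is correct and follows essentially the same route as the paper: both write $s_p(\mathbf{x}) = \max_i \#\{j : y_j \geq y_i\}\,y_i^p$ via \cref{theo: computation of weaklp}, use the convexity of $p \mapsto y_i^p$ (the paper proves this by the same exponential comparison $e^{ta+(1-t)b} \leq te^a+(1-t)e^b$ that underlies your observation), and pass convexity through the finite maximum. Your handling of strict convexity—exhibiting the active index $i_\star(\mathbf{x},p_\lambda) \in \mathcal{I}_\star(\mathbf{x})$ at the intermediate point and chaining through its strictly convex $g_{i^\star}$—is the same idea the paper invokes when it says the inequalities become strict under the hypothesis, just spelled out more explicitly.
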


\begin{proof}Let $\mathbf{x} \in \mathbb{R}^n$, $p_1, p_2 \in (0,1]$, $p_1 \neq p_2$, and $t \in [0,1]$. We have that, by \cref{theo: computation of weaklp},
\begin{align} \label{eq: calculation of f_x}
f_\mathbf{x}(t p_1 + (1 - t) p_2) &= s_{t p_1 + (1 - t) p_2}(\mathbf{x}), \nonumber \\
&= \|\mathbf{x}\|_{w\ell_{t p_1 + (1 - t) p_2}}^{t p_1 + (1 - t) p_2}, \nonumber \\
&= \max_i \#\{ j : y_j \geq y_i \} y_i^{t p_1 + (1 - t) p_2}, \nonumber \\
&= \max_i \#\{ j : y_j \geq y_i \} y_i^{t p_1} y_i^{(1 - t) p_2},
\end{align}
where $\mathbf{y} = \Omega(\mathbf{x})$ is the ordering of $\mathbf{x}$. If we prove that, for all $i \in \mathcal{I}_\star(\mathbf{x})$,
\begin{equation} \label{eq: principal inequality}
y_i^{t p_1} y_i^{(1-t) p_2} \leq t y_i^{p_1} + (1-t) y_i^{p_2},
\end{equation}
combining \cref{eq: calculation of f_x} and \cref{eq: principal inequality}, it follows that,
\begin{align} \label{eq: convexity result}
\max_i \#\{ j : y_j \geq y_i \} y_i^{t p_1} y_i^{(1 - t) p_2} &\leq
\max_i \#\{ j : y_j \geq y_i \} \left(t y_i^{p_1} + (1-t) y_i^{p_2}\right), \\
&\leq \max_i \#\{ j : y_j \geq y_i \} t y_i^{p_1} + 
\max_k \#\{ j : y_j \geq y_k \} (1-t) y_k^{p_2}, \nonumber \\
&= t s_{p_1}(\mathbf{x}) + (1-t) s_{p_2}(\mathbf{x}) = t f_\mathbf{x}(p_1) + (1-t) f_\mathbf{x}(p_2), \nonumber
\end{align}
proving that $f_\mathbf{x}$ is convex. Hence, we proceed to prove \cref{eq: principal inequality}. Let $a,b \in \mathbb{R}$, and define the functions $g(t) = e^{ta + (1-t)b}$ and $h(t) = te^a + (1-t)e^b$. We then have that,
\begin{align*}
g(0) = e^b, h(0) = e^b, \\
g(1) = e^a, h(1) = e^a.
\end{align*}
That is, both functions coincide at values $t = 0, 1$. Noting that the graph of $h$ is a line, and observing that $g''(t) = (a-b)^2e^{ta +(1-t)b} \geq 0$, it follows that $g$ is convex, and conclude that $g(t) \leq h(t)$ for all $t \in [0,1]$. Setting $a = \ln(y_i^{p_1})$ and $b = \ln(y_i^{p_2})$, we get that for $t \in [0,1]$,
\begin{align} \label{eq: g convex}
g(t) \leq h(t) \quad \Rightarrow \quad
&e^{t \ln(y_i^{p_1}) + (1-t) \ln{y_i^{p_2}}} \leq t e^{\ln(y_i^{p_1})} + (1-t) e^{\ln(y_i^{p_2})}, \nonumber \\
\Leftrightarrow \quad &y_i^{t p_1} y_i^{(1-t) p_2} \leq t y_i^{p_1} + (1-t) y_i^{p_2},
\end{align}
proving that \cref{eq: principal inequality} holds, as required to complete the first half of the proof. For the second half of the claim, simply note that if $\mathcal{I}_\star(\mathbf{x})$ is such that $\{ y_i : i \in \mathcal{I}_\star(\mathbf{x})\} \cap \{0,1\} = \emptyset$, \cref{eq: principal inequality,eq: convexity result,eq: g convex} become strict inequalities, resulting then in strict convexity for $f_\mathbf{x}$.
\end{proof}

\begin{definition}[Sparsity $s$]
\label{def: sparsity s}We define the sparsity $s$ as the function $s : \mathbb{R}^n \rightarrow \mathbb{R}$ that assigns to every vector $\mathbf{x} \in \mathbb{R}^n$ the number
\begin{equation*}
s(\mathbf{x}) = \inf_{p \in (0,1]} s_p(\mathbf{x}).
\end{equation*} 
\end{definition}

To check that $s$ is well defined, we simply need to prove that for every vector $\mathbf{x} \in \mathbb{R}^n$, $s(\mathbf{x}) \in \mathbb{R}$. Let $p \in (0,1]$. Then, from \cref{theo: properties of weak-lp norm} and \cref{def: s_p and <_{s_p}}, we have that
\begin{equation*}
0 \leq s_p(\mathbf{x}) = \|\mathbf{x}\|_{w\ell^p}^p \leq \|\mathbf{x}\|_p^p < \infty,
\end{equation*}
hence the set $\{ s_p(\mathbf{x}) : p \in (0,1] \}$ is bounded and, therefore, the number $\inf_{p \in (0,1]} s_p(\mathbf{x})$ exists and is unique, which means that $s$ is well defined. Moreover, in light of \cref{theo: convexity of s_p}, computing $s$ can be easily achieved by convex minimization techniques.

It is easy to see that $s$ has the following properties, which we state without proof.

\begin{theorem} \label{theo: properties of s}
Let $\mathbf{x} \in \mathbb{R}^n$, and $\mathbf{y} = (y_1, y_2, \ldots, y_n)^\text{T} = \Omega(\mathbf{x})$ its ordering. Then,
\begin{enumerate}
\item $0 \leq s(\mathbf{x}) \leq \|\mathbf{x}\|_0$.
\item If $\mathbf{y}$ is such that $\mathbf{y} \in Diag(\mathbb{R}^n) = \{ \mathbf{c} \in \mathbb{R}^n : \mathbf{c} = (c ,c ,\ldots, c)^\text{T}, c \in \mathbb{R} \}$, i.e., the ordering of $\mathbf{x}$ is a vector in the diagonal of $\mathbb{R}^n$, then
\begin{equation*}
s(\mathbf{x}) =
\begin{cases}
n, &\text{if } c \geq 1, \\
cn, &\text{if } c < 1,
\end{cases}
\end{equation*}
where $\mathbf{y} = (c, c, \ldots, c)^\text{T}$. Recall that, by definition of $\mathbf{y}$, we must have $c \geq 0$.
\item If $\mathbf{y}$ is such that $y_1 \geq 1$, then $s(\mathbf{x}) = n$.
\end{enumerate}
\end{theorem}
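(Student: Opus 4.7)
The plan is to derive each of the three items from the explicit formula \cref{eq: weaklp computation} together with the limit identity \cref{eq: zero norm as a limit}; everything else reduces to elementary monotonicity of the map $p \mapsto c^p$.

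For item~1, the lower bound $s(\mathbf{x}) \geq 0$ is immediate since $s_p(\mathbf{x}) = \|\mathbf{x}\|_{w\ell^p}^p \geq 0$ by \cref{theo: properties of weak-lp norm}. For the upper bound, I would invoke $\lim_{p \to 0^+} s_p(\mathbf{x}) = \|\mathbf{x}\|_0$ from \cref{eq: zero norm as a limit}: for every $\epsilon > 0$ this furnishes some $p \in (0,1]$ with $s_p(\mathbf{x}) < \|\mathbf{x}\|_0 + \epsilon$, so the infimum defining $s(\mathbf{x})$ is at most $\|\mathbf{x}\|_0$.

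For item~2, when $\mathbf{y} = (c,c,\ldots,c)^\text{T}$ with $c \geq 0$, every candidate in the max of \cref{eq: weaklp computation} equals $n c^p$, so $s_p(\mathbf{x}) = n c^p$, and I would split into cases: if $c = 0$ the value is identically $0$, matching $cn$; if $0 < c < 1$ the map $p \mapsto c^p$ is strictly decreasing on $(0,1]$, so the infimum $nc$ is attained at $p = 1$; if $c \geq 1$ the map is nondecreasing with $\lim_{p \to 0^+} c^p = 1$, so the infimum equals $n$.

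For item~3, picking the index $i = 1$ inside the max in \cref{eq: weaklp computation} yields the contribution $n y_1^p$, because every $y_j \geq y_1$; and $y_1 \geq 1$ implies $y_1^p \geq 1$, hence $s_p(\mathbf{x}) \geq n$ for every $p \in (0,1]$ and $s(\mathbf{x}) \geq n$. The reverse inequality $s(\mathbf{x}) \leq n$ follows from item~1 together with the observation that $y_1 \geq 1$ forces every $y_i \geq 1 > 0$, so $\|\mathbf{x}\|_0 = n$. The only substantive subtlety across all three items is that $s(\mathbf{x})$ is an infimum and not necessarily a minimum: in item~2 with $c > 1$, and again in item~3, the value $n$ is approached only as $p \to 0^+$ and is not attained on $(0,1]$, so the argument cannot be replaced by simply ``plugging in $p = 0$''; it must route through \cref{eq: zero norm as a limit}. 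Beyond keeping careful track of this distinction, I do not foresee any real obstacle.
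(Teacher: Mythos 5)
Your proposal is correct: items 1--3 indeed follow from the formula in \cref{theo: computation of weaklp} together with the limit identity \cref{eq: zero norm as a limit}, and you rightly treat $s(\mathbf{x})$ as an infimum that need not be attained (item 2 with $c>1$, item 3). The paper states \cref{theo: properties of s} without proof, so there is nothing to compare against; your argument supplies the omitted verification using exactly the tools the paper has already developed.
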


\begin{figure}[htb]
\centering
\includegraphics[width=0.75\textwidth]{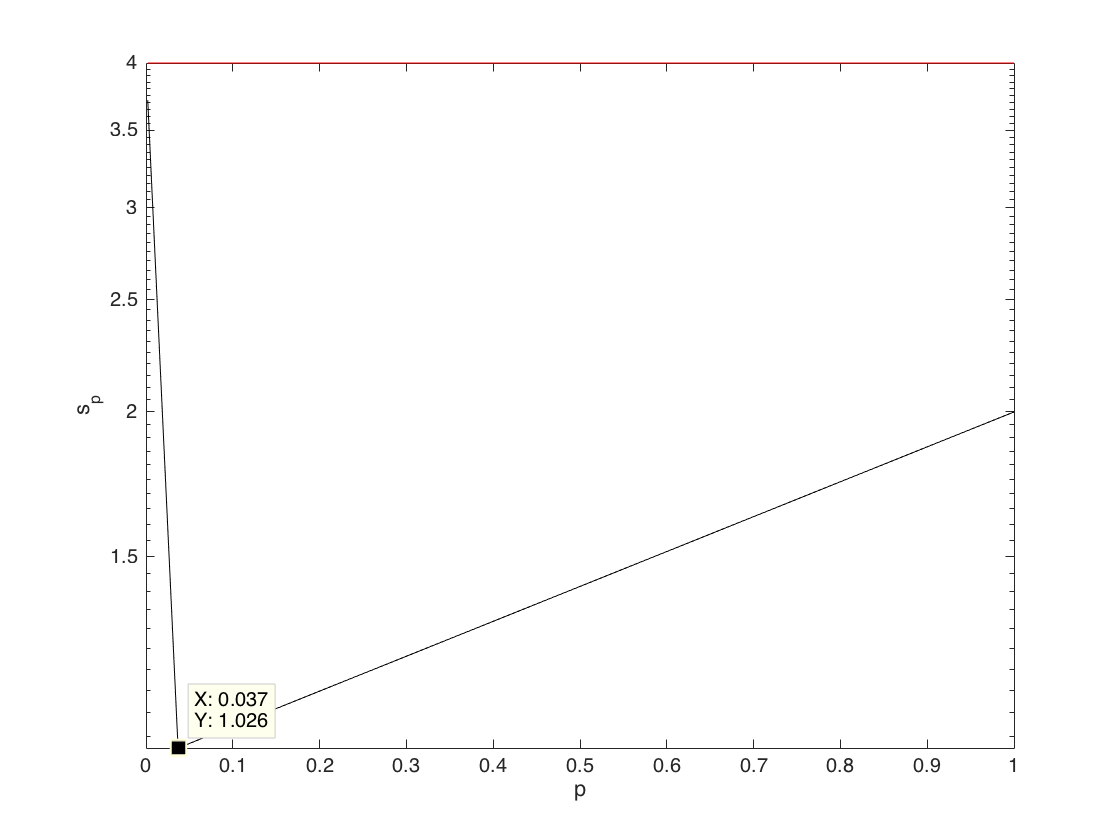}
\caption{Graph of sparsity $s_p(\mathbf{x}_k)$ as a function of $p$ for vectors $\mathbf{x}_1$ and $\mathbf{x}_2$ from \cref{ex: l_0 norm bad at measuring sparcity}. The red line at the top corresponds to $\mathbf{x}_1$, and the black graph, two line segments, corresponds to $\mathbf{x}_2$. Note that we are using a logarithmic scale on the ordinates axis. The point at the bottom shows approximately where $s_p(\mathbf{x}_2)$ reaches its minimum, equal to the sparsity $s(\mathbf{x}_2)$.}
\label{fig: example 1}
\end{figure}

With this new definition of sparsity in hand, we revisit \cref{ex: l_0 norm bad at measuring sparcity} by computing $s(\mathbf{x}_1)$ and $s(\mathbf{x}_2)$, recalling that $\mathbf{x}_1 = (1, 1, 1, 1, 0)^\text{T}$ and $\mathbf{x}_2 = (2, 10^{-16}, 10^{-16}, 10^{-16}, 0)^\text{T}$. This calculation requires from us to compute $s_p(\cdot)$ repeatedly, for which we refer the reader to \cref{theo: computation of weaklp} on how to do it from now on.

We have that $s_p(\mathbf{x}_1) = \max \{ 5 \times 0^p, 4 \times 1^p \} = 4$, and therefore $s(\mathbf{x}_1) = 4$, which, we note, is equal to $\|\mathbf{x}_1\|_0$. Now for $s(\mathbf{x}_2)$, we have that $s_p(\mathbf{x}_2) = \max \{ 5 \times 0^p, 4 \times 10^{-16 p}, 1 \times 2^p \}$. If we draw the graph of $s_p(\mathbf{x}_2)$ as a function of $p$, we see that it is the union of two curves $\Gamma_1$ and $\Gamma_2$, with $\Gamma_1 = \left\{ (p, 4 \times 10^{-16 p}) \in \mathbb{R}^2 : p \in (0, p_0] \right\}$ and $\Gamma_2 = \left\{ (p, 2^p) \in \mathbb{R}^2 : p \in [p_0,1] \right\}$, where $p_0$ is the abscissa such that $4 \times 10^{-16 p_0} = 2^{p_0}$, readily seen as the minimum of $s_p(\mathbf{x}_2)$ over $p \in (0,1]$. It is easy to compute that $p_0 = \frac{\ln(4)}{\ln(2) + 16 \ln(10)} \approx 0.036934$, from which $s(\mathbf{x}_2) \approx 1.025931$, faithfully reflecting the fact that most of the entries in $\mathbf{x}_2$ are practically zero, except for one of them, which is distinctly nonzero. See \cref{fig: example 1}.

\subsection{Unitary transforms and sparse representations} \label{sec: unitary transforms and sparse representations}
In this section we use our new definition of sparsity to explore unitary transforms and sparse representations stemming from them, which we define next.

\begin{definition}[Sparsifying transform and sparse representation] \label{def: sparsifying transform}
Let $\mathbf{T} \in \mathbb{R}^{n \times n}$ be a unitary matrix, and consider the transform $T : \mathbb{R}^n \rightarrow \mathbb{R}^n$ that assigns to every vector $\mathbf{v} \in \mathbb{R}^n$ the vector $T(\mathbf{v}) = \mathbf{T}\mathbf{v} = \mathbf{x}$. We say that $T$ is a sparsifying transform for $V \subset \mathbb{R}^n$ if and only if for every vector $\mathbf{v} \in V$, we have that
\begin{equation*}
\mathbf{x} <_{s_p} \mathbf{v}, \quad \text{for all } p \in (0,1].
\end{equation*}
In this case we say that $\mathbf{T}$ is a sparsifying matrix for $V$, $\mathbf{x}$ is a sparse representation of $\mathbf{v}$ (under $\mathbf{A} = \mathbf{T}^{-1} = \mathbf{T}^\text{T}$), and $\mathbf{v}$ admits ($\mathbf{x}$ as) a sparse representation (under $\mathbf{A} = \mathbf{T}^{-1} = \mathbf{T}^\text{T}$).
\end{definition}

Note that the notion of a sparsifying transform is well defined since it applies to unitary matrices, which preserve energy, i.e., $\|\mathbf{v}\|_2^2 = \|\mathbf{T} \mathbf{v}\|_2^2$ for all $\mathbf{v} \in \mathbb{R}^n$, and therefore $\mathbf{x} = \mathbf{T} \mathbf{v}$ and $\mathbf{v}$ can be compared by the sparsity relation $<_{s_p}$, see \cref{def: s_p and <_{s_p}}.

\begin{theorem}[Sparsity and energy distribution] \label{theo: sparsity and energy distribution}
Let $\mathbf{T} \in \mathbb{R}^{n \times n}$ be a sparsifying matrix for $V \subset \mathbb{R}^n$, and let $\mathbf{v} \in V$ be a vector whose transform is $\mathbf{x} = \mathbf{T} \mathbf{v}$. If $\mathbf{y} = \Omega(\mathbf{x})$ and $\mathbf{w} = \Omega(\mathbf{v})$ are the orderings of $\mathbf{x}$ and $\mathbf{v}$, respectively, then there exists an integer $i_\natural$ such that $y_{i_\natural} < w_{i_\natural}$ and $y_i \geq w_i$ for all $i > i_\natural$. Moreover,
\begin{equation} \label{eq: energy distribution}
\sum_{i \leq i_\natural} y_i^2 \leq \sum_{i \leq i_\natural} w_i^2 \quad \text{and} \quad
\sum_{i > i_\natural} w_i^2 \leq \sum_{i > i_\natural} y_i^2.
\end{equation}
\end{theorem}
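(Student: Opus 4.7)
The plan is to let $i_\natural$ be the largest index at which the ordered sequence $\mathbf{y}$ falls strictly below $\mathbf{w}$, and then read the energy inequalities off the pigeonholing that $\mathbf{T}$ being unitary enforces. Concretely, set $S^c = \{ i : y_i < w_i\}$; the whole theorem reduces to two claims, namely that $S^c \neq \emptyset$ and that $i_\natural := \max S^c$ has the two stated properties.

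First I would dispose of the easy structural points. Because $\mathbf{T}$ is unitary, $\sum_i y_i^2 = \|\mathbf{x}\|_2^2 = \|\mathbf{v}\|_2^2 = \sum_i w_i^2$. The property $y_i \geq w_i$ for all $i > i_\natural$ is built into the definition of $i_\natural$ as $\max S^c$. Once that inequality is in hand, $\sum_{i > i_\natural}(y_i^2 - w_i^2) \geq 0$ since both $y_i$ and $w_i$ are non-negative entries of orderings, and energy preservation then flips this into $\sum_{i \leq i_\natural}(y_i^2 - w_i^2) \leq 0$, which is precisely the first of the two inequalities in \cref{eq: energy distribution}. So the energy part is a one-line corollary of the index step.

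The real content is showing $S^c \neq \emptyset$, and this is where the sparsifying hypothesis on $\mathbf{T}$ enters. I would argue by contradiction: if $y_i \geq w_i$ for every $i$, then all terms of $\sum_i(y_i^2 - w_i^2) = 0$ are non-negative, hence each is zero, hence $\mathbf{y} = \mathbf{w}$. But by \cref{theo: computation of weaklp}, $\|\mathbf{x}\|_{w\ell^p}^p$ depends on $\mathbf{x}$ only through $\Omega(\mathbf{x})$, so $\mathbf{y} = \mathbf{w}$ forces $\|\mathbf{x}\|_{w\ell^p} = \|\mathbf{v}\|_{w\ell^p}$ for every $p \in (0,1]$, contradicting the strict inequality $\|\mathbf{x}\|_{w\ell^p} < \|\mathbf{v}\|_{w\ell^p}$ required by $\mathbf{x} <_{s_p} \mathbf{v}$ in \cref{def: sparsifying transform}.

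I do not anticipate any serious obstacle. The argument is driven by (i) unitarity giving conservation of $\sum_i y_i^2$, (ii) the strict weak-$\ell^p$ inequality in the sparsifying hypothesis being just enough to rule out $\mathbf{y} = \mathbf{w}$, and (iii) choosing $i_\natural$ as the largest index where $y_i < w_i$ so that everything above it is automatically in the complement of $S^c$. The only mild subtlety is that the sparsifying condition supplies a whole family of strict inequalities indexed by $p$, but a single one of them already suffices to block $\mathbf{y} = \mathbf{w}$ via \cref{theo: computation of weaklp}, so no delicate $p$-dependent analysis is required.
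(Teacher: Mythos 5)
Your proposal is correct, and its overall skeleton matches the paper's: both define $i_\natural$ as the largest index of $\{i : y_i < w_i\}$ and both obtain the two energy inequalities from $y_i \geq w_i$ for $i > i_\natural$ together with $\sum_i y_i^2 = \sum_i w_i^2$ (unitarity). Where you genuinely diverge is the nonemptiness of $\{i : y_i < w_i\}$. The paper proves this directly and without unitarity: for any $p \in (0,1]$ it evaluates the max formula of \cref{theo: computation of weaklp} at the index $i_\star(\mathbf{v},p)$, getting
\begin{equation*}
(n - i_\star(\mathbf{v},p) + 1)\, y_{i_\star(\mathbf{v},p)}^p \;\leq\; s_p(\mathbf{x}) \;<\; s_p(\mathbf{v}) \;=\; (n - i_\star(\mathbf{v},p) + 1)\, w_{i_\star(\mathbf{v},p)}^p,
\end{equation*}
so $y_{i_\star(\mathbf{v},p)} < w_{i_\star(\mathbf{v},p)}$, which exhibits a concrete index in the set. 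You instead argue by contradiction: if $y_i \geq w_i$ everywhere, energy conservation forces $\mathbf{y} = \mathbf{w}$, and since the weak-$\ell^p$ norm depends on a vector only through its ordering, this contradicts the strict inequality $\|\mathbf{x}\|_{w\ell^p} < \|\mathbf{v}\|_{w\ell^p}$ from \cref{def: sparsifying transform}. Your route is sound and somewhat more elementary (no manipulation of the maximizing index is needed), at the cost of invoking unitarity twice; the paper's route buys a little extra information, namely an explicit family of indices $i_\star(\mathbf{v},p)$ at which $\mathbf{y}$ falls strictly below $\mathbf{w}$. As a side remark, your contradiction step could dispense with unitarity altogether: $y_i \geq w_i$ for all $i$ gives $N(\epsilon,\mathbf{y}) \geq N(\epsilon,\mathbf{w})$ for every $\epsilon > 0$, hence $\|\mathbf{x}\|_{w\ell^p} \geq \|\mathbf{v}\|_{w\ell^p}$ directly from \cref{def: weaklp}, which already contradicts the sparsifying hypothesis.
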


\begin{proof}
Let $p \in (0,1]$, and $\mathbf{v} \in V$. Since $\mathbf{T}$ is sparsifying for $V$, we have, by \cref{def: sparsifying transform}, that $s_p(\mathbf{x}) < s_p(\mathbf{v})$, from which, by \cref{theo: computation of weaklp},
\begin{equation*}
(n - i_\star(\mathbf{v},p) + 1) y_{i_\star(\mathbf{v},p)}^p \leq
(n - i_\star(\mathbf{x},p) + 1) y_{i_\star(\mathbf{x},p)}^p <
(n - i_\star(\mathbf{v},p) + 1) w_{i_\star(\mathbf{v},p)}^p,
\end{equation*}
hence $y_{i_\star(\mathbf{v},p)} < w_{i_\star(\mathbf{v},p)}$. Therefore, the set $I = \{i \in \{1, 2, \ldots, n\} : y_i < w_i \} \neq \emptyset$. Let $i_\natural = \max(I)$, be the largest integer in $I$, from which it follows that,
\begin{equation} \label{eq: y less than w at i-natural}
\forall\ i > i_\natural,\ y_i \geq w_i.
\end{equation}
Now, since $\mathbf{T}$ is a unitary matrix, we must have that $\sum_{i=1}^n w_i^2 = \sum_{i=1}^n y_i^2$, from which,
\begin{align*}
&\sum_{i=1}^n (w_i^2 - y_i^2) = 0, \\
\Leftrightarrow \quad &\sum_{i \leq i_\natural} (w_i^2 - y_i^2) +
\sum_{i > i_\natural} (w_i^2 - y_i^2) = 0, \\
\Leftrightarrow \quad & \sum_{i \leq i_\natural} (w_i^2 - y_i^2) =
\sum_{i > i_\natural} (y_i^2 - w_i^2) \geq 0, \ \text{because of \cref{eq: y less than w at i-natural}},
\end{align*}
from which the inequalities in \cref{eq: energy distribution} are easily derived.
\end{proof}
Observe that \cref{theo: sparsity and energy distribution} tells us that the energy in a signal $\mathbf{v} \in \mathbb{R}^n$ gets redistributed into potentially fewer coefficients of its sparse representation $\mathbf{x} = \mathbf{T} \mathbf{v}$, when $\mathbf{T}$ is a sparsifying matrix for $\{\mathbf{v}\}$. We can colloquially say that {\em the energy got squeezed to the right in the ordering of the transform when compared to the ordering of the signal.} See \cref{fig: energy squeezing by a sparsifying transform}, for example.

\begin{figure}[htb]
\centering
\includegraphics[width=0.75\textwidth]{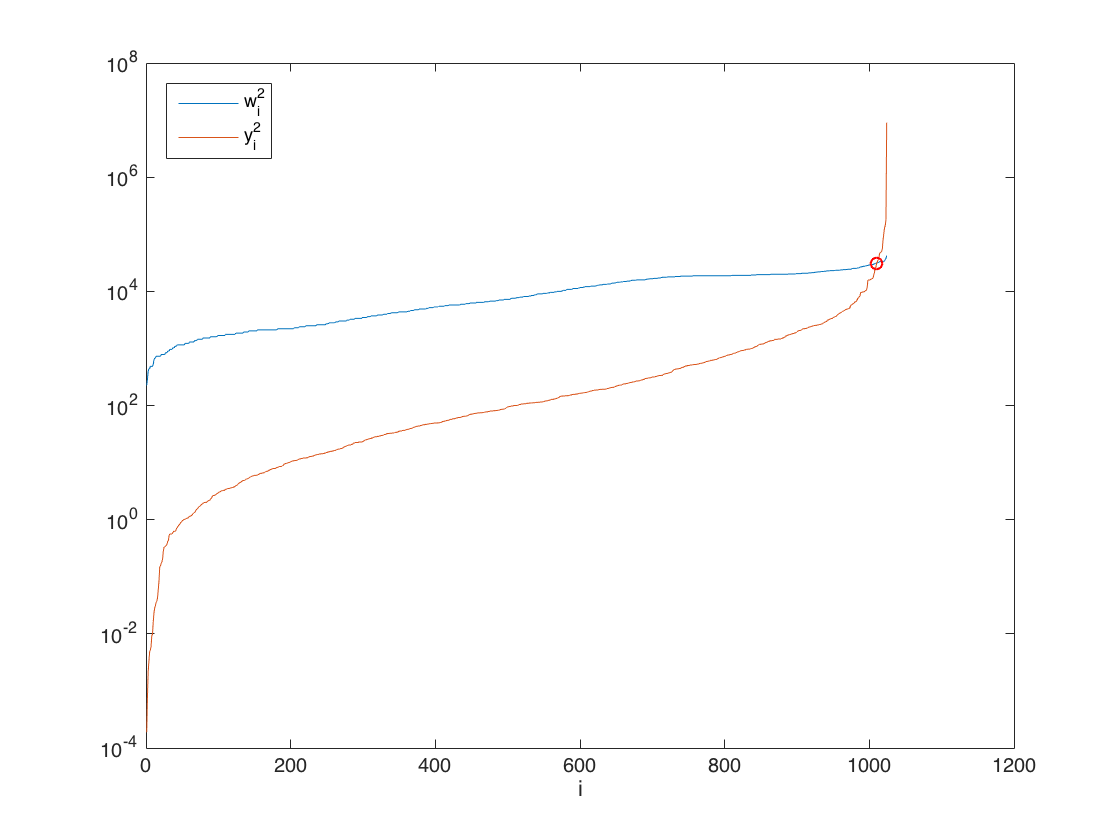}
\caption{Graph of the squared entries of the orderings $\mathbf{w} = \Omega(\mathbf{v})$ and $\mathbf{y} = \Omega(\mathbf{x})$ of a signal $\mathbf{v} \in \mathbb{R}^{1024}$ and its sparse transform $\mathbf{x}$, respectively, under a sparsifying matrix $\mathbf{T}$ for that signal. The red circle denotes the point $(i_\natural,y_{i_\natural}^2)$, with $i_\natural = 1010$ in this example.}
\label{fig: energy squeezing by a sparsifying transform}
\end{figure}

\begin{theorem} \label{theo: s and sparifying transforms}
Let $\mathbf{T} \in \mathbb{R}^{n \times n}$ be a sparsifying matrix for $V \subset \mathbb{R}^n$, and let $\mathbf{v} \in V$ be a vector whose transform is $\mathbf{x} = \mathbf{T} \mathbf{v}$. Then $s(\mathbf{x}) \leq s(\mathbf{v})$. Moreover, if $s(\mathbf{v}) < \|\mathbf{v}\|_0$, then $s(\mathbf{x}) < s(\mathbf{v})$.
\end{theorem}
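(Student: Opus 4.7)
The plan is to reduce everything to the pointwise inequality $s_p(\mathbf{x})<s_p(\mathbf{v})$ that is already baked into the definition of a sparsifying transform, and then to pick a clever $p^\star$ that realizes the infimum defining $s(\mathbf{v})$.

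First I would observe that because $\mathbf{T}$ is sparsifying for $V$, \cref{def: sparsifying transform} gives $\mathbf{x}<_{s_p}\mathbf{v}$ for every $p\in(0,1]$, which by \cref{def: s_p and <_{s_p}} means $s_p(\mathbf{x})=\|\mathbf{x}\|_{w\ell^p}^p<\|\mathbf{v}\|_{w\ell^p}^p=s_p(\mathbf{v})$ for every $p$. Taking the infimum over $p\in(0,1]$ on both sides of this strict inequality yields $s(\mathbf{x})\le s(\mathbf{v})$, which is the first claim. Note that the infimum operation only preserves $\le$, not $<$, so additional work is needed for the strict version.

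For the ``moreover'' part, suppose $s(\mathbf{v})<\|\mathbf{v}\|_0$. The strategy is to argue that the infimum $s(\mathbf{v})=\inf_{p\in(0,1]} f_\mathbf{v}(p)$ is in fact attained at some $p^\star\in(0,1]$. The map $f_\mathbf{v}(p)=\max_i\#\{j:y_j\ge y_i\}\,y_i^p$ is a maximum of finitely many continuous functions of $p$, hence continuous on $(0,1]$; it is convex by \cref{theo: convexity of s_p}; and by \cref{eq: zero norm as a limit} it satisfies $\lim_{p\to 0^+} f_\mathbf{v}(p)=\|\mathbf{v}\|_0$. Take a minimizing sequence $p_n\in(0,1]$ with $f_\mathbf{v}(p_n)\to s(\mathbf{v})$, and pass to a subsequence converging to some $p^\star\in[0,1]$. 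If $p^\star=0$, continuity at $0^+$ forces $s(\mathbf{v})=\|\mathbf{v}\|_0$, contradicting our hypothesis; hence $p^\star\in(0,1]$ and continuity at $p^\star$ gives $f_\mathbf{v}(p^\star)=s(\mathbf{v})$.

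Finally I would apply the pointwise strict inequality at $p^\star$: since $s_{p^\star}(\mathbf{x})<s_{p^\star}(\mathbf{v})$, combining with $s(\mathbf{x})\le s_{p^\star}(\mathbf{x})$ gives
\begin{equation*}
s(\mathbf{x})\;\le\;s_{p^\star}(\mathbf{x})\;<\;s_{p^\star}(\mathbf{v})\;=\;s(\mathbf{v}),
\end{equation*}
which is the required strict inequality. The main obstacle is the middle step: turning the ``pointwise strict, but infimum only $\le$'' gap into a genuine strict inequality, which is exactly why the hypothesis $s(\mathbf{v})<\|\mathbf{v}\|_0$ is needed to rule out the infimum escaping to the boundary $p\to 0^+$ where it cannot be evaluated.
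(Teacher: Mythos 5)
Your proposal is correct and follows essentially the same route as the paper: establish the pointwise strict inequality $s_p(\mathbf{x})<s_p(\mathbf{v})$ from the definition of a sparsifying transform, pass to the infimum for the weak inequality, and for the strict part use the hypothesis $s(\mathbf{v})<\|\mathbf{v}\|_0$ together with $\lim_{p\to0^+}s_p(\mathbf{v})=\|\mathbf{v}\|_0$ to show the infimum is attained at some $p_\star\in(0,1]$, then evaluate the strict pointwise inequality there. The only (harmless) difference is that you obtain attainment via a minimizing sequence and Bolzano--Weierstrass, whereas the paper restricts to a compact subinterval $[\delta_0,1]$ by an $\epsilon$--$\delta$ argument and invokes continuity plus compactness.
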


\begin{proof}
Let $\mathbf{v} \in V$ and $p \in (0,1]$. Since $\mathbf{T}$ is a sparsifying matrix for $V$, $s_p(\mathbf{T} \mathbf{v}) < s_p(\mathbf{v})$. But, by definition, $s(\mathbf{T} \mathbf{v}) = \inf_{p \in (0,1]} s_p(\mathbf{T} \mathbf{v}) \leq s_p(\mathbf{T} \mathbf{v})$, hence $s(\mathbf{T} \mathbf{v})$ is a lower bound for $s_p(\mathbf{v})$. Therefore, $s(\mathbf{T} \mathbf{v}) \leq \inf_{p \in (0,1]} s_p(\mathbf{v}) = s(\mathbf{v})$. Hence, $s(\mathbf{x}) \leq s(\mathbf{v})$, where $\mathbf{x} = \mathbf{T} \mathbf{v}$.

Now assume that $s(\mathbf{v}) < \|\mathbf{v}\|_0$. \Cref{def: s_p and <_{s_p}} and \cref{eq: zero norm as a limit} imply that $\lim_{p \rightarrow 0^+}s_p(\mathbf{v}) =\|\mathbf{v}\|_0$. By definition, this means that for all $\epsilon > 0$, there exists a $\delta > 0$ such that for all $p \in (0,\delta)$, $|s_p(\mathbf{v}) - \|\mathbf{v}\|_0| < \epsilon$. Let $\epsilon_0 = \frac{\|\mathbf{v}\|_0 - s(\mathbf{v})}{2}$, then there exists a $\delta_0 > 0$ such that for all $p \in (0,\delta_0)$ we have that,
\begin{gather*}
|s_p(\mathbf{v}) - \|\mathbf{v}\|_0 | < \epsilon_0 = \frac{\|\mathbf{v}\|_0 - s(\mathbf{v})}{2}, \\
\Rightarrow \quad -s_p(\mathbf{v}) + \|\mathbf{v}\|_0 <
	\frac{\|\mathbf{v}\|_0 - s(\mathbf{v})}{2}, \\
\Leftrightarrow \quad -s_p(\mathbf{v}) < - \frac{\|\mathbf{v}\|_0 + s(\mathbf{v})}{2}, \\
\Leftrightarrow \quad s_p(\mathbf{v}) > \frac{\|\mathbf{v}\|_0 + s(\mathbf{v})}{2} >
	 s(\mathbf{v}), \\
\Rightarrow \quad \inf_{p \in (0,\delta_0)} s_p(\mathbf{v}) > s(\mathbf{v}) =
	\inf_{p \in (0,1]} s_p(\mathbf{v}) = \inf_{p \in [\delta_0,1]} s_p(\mathbf{v}).
\end{gather*}
Since $s_p(\mathbf{v})$ is a continuous function of $p$, and $[\delta_0, 1]$ is compact, there exists a $p_\star \in [\delta_0, 1]$ such that $\inf_{p \in [\delta_0, 1]} s_p(\mathbf{v}) = s_{p_\star}(\mathbf{v})$, therefore $s(\mathbf{v}) = s_{p_\star}(\mathbf{v})$. Since $\mathbf{T}$ is a sparsifying matrix for $V$, we have that $s(\mathbf{x}) \leq s_{p_\star}(\mathbf{x}) < s_{p_\star}(\mathbf{v}) = s(\mathbf{v})$. Hence, $s(\mathbf{x}) < s(\mathbf{v})$.
\end{proof}

In a similar but opposite observation to what happens to the energy in view of \cref{theo: sparsity and energy distribution}, here, the sparsity of the transform $\mathbf{x}$ of a signal $\mathbf{v}$ under a sparsifying matrix $\mathbf{T}$ gets shifted to the {\em left} of the sparsity value of said signal.

In the proof of \cref{theo: s and sparifying transforms}, given a vector $\mathbf{x} \in \mathbb{R}^n$, we used the notation $p_\star$ to talk about a value of $p$ for which $s_p(\mathbf{x})$ reaches its minimum as a function of $p$ on $(0,1]$, resulting in $s(\mathbf{x}) = s_{p_\star}(\mathbf{x})$. If $\mathbf{x}$ is such that $\{ y_i : (y_1, y_2, \ldots, y_n)^\text{T} = \Omega(\mathbf{x}), i \in \mathcal{I}_\star(\mathbf{x})\} \cap \{0,1\} = \emptyset$, by \cref{theo: convexity of s_p}, $s_p(\mathbf{x})$ is strictly convex as a function of $p$, making $p_\star$ unique in this case. When $s(\mathbf{x}) = \|\mathbf{x}\|_0$, from \cref{eq: zero norm as a limit}, we can set $p_\star = 0$, and think of $p_\star$ as the unique value of $p \in [0,1]$ for which $s(\mathbf{x}) = s_{p_\star}(\mathbf{x})$. These results and observations can be summarized in the following theorem.

\begin{theorem} \label{theo: unique p_star}
Given a vector $\mathbf{x} \in \mathbb{R}^n$, if $\{ y_i : (y_1, y_2, \ldots, y_n)^\text{T} = \Omega(\mathbf{x}), i \in \mathcal{I}_\star(\mathbf{x})\} \cap \{0,1\} = \emptyset$, then there is a unique $p_\star \in [0,1]$ such that $s(\mathbf{x}) = s_{p_\star}(\mathbf{x})$. Recall that, $\mathcal{I}_\star(\mathbf{x}) = \{ i_\star(\mathbf{x},p) : p \in (0,1] \}$, and that $i_\star = i_\star(\mathbf{x},p)$ is the smallest integer such that $s_p(\mathbf{x}) = \|\mathbf{x}\|_{w\ell^p}^p = \#\{j : y_j \geq y_{i_\star}\} y_{i_\star}^p$.
\end{theorem}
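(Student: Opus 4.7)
The plan is to combine three facts already established in the excerpt: strict convexity of $f_\mathbf{x}(p) = s_p(\mathbf{x})$ on $(0,1]$ under the stated hypothesis (\cref{theo: convexity of s_p}), continuity of $f_\mathbf{x}$ on $(0,1]$ (since by \cref{theo: computation of weaklp} it is a maximum of finitely many continuous functions $p \mapsto (n-i+1) y_i^p$), and the boundary limit $\lim_{p \to 0^+} f_\mathbf{x}(p) = \|\mathbf{x}\|_0$ from \cref{eq: zero norm as a limit}. Existence of a minimizer was already essentially covered inside the proof of \cref{theo: s and sparifying transforms}; the new content here is uniqueness across the closed range $[0,1]$, so I would split into two cases according to whether $s(\mathbf{x}) < \|\mathbf{x}\|_0$ or $s(\mathbf{x}) = \|\mathbf{x}\|_0$.

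In the first case, the argument given inside the proof of \cref{theo: s and sparifying transforms} already produces a $\delta_0 > 0$ and a $p_\star \in [\delta_0, 1]$ with $f_\mathbf{x}(p_\star) = s(\mathbf{x})$, so a minimizer exists in $(0,1]$. If a second minimizer $p_\star' \in (0,1]$ existed, strict convexity on $(0,1]$ would give $f_\mathbf{x}\bigl(\tfrac{p_\star + p_\star'}{2}\bigr) < \tfrac{1}{2} f_\mathbf{x}(p_\star) + \tfrac{1}{2} f_\mathbf{x}(p_\star') = s(\mathbf{x})$, contradicting that $s(\mathbf{x})$ is the infimum. Moreover, $p_\star \neq 0$ in the extended sense because $f_\mathbf{x}(0^+) = \|\mathbf{x}\|_0 > s(\mathbf{x})$, so uniqueness holds across all of $[0,1]$.

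In the second case, setting $p_\star = 0$ with the interpretation $s_0(\mathbf{x}) := \|\mathbf{x}\|_0$ delivers existence. For uniqueness, I would extend $f_\mathbf{x}$ to $[0,1]$ by continuity via $\tilde{f}(0) := \|\mathbf{x}\|_0$ and verify that convexity on $(0,1]$ passes to $\tilde{f}$ on $[0,1]$: for $p \in (0,1]$, $t \in (0,1)$ and $q = tp$, convexity on any interior pair $p_0' < q < p$ combined with $f_\mathbf{x}(p_0') \to \|\mathbf{x}\|_0$ as $p_0' \to 0^+$ yields $\tilde{f}(q) \leq (1-t)\|\mathbf{x}\|_0 + t f_\mathbf{x}(p)$. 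Then if some $p_1 \in (0,1]$ also satisfied $f_\mathbf{x}(p_1) = \|\mathbf{x}\|_0$, the inequality $\tilde{f} \geq s(\mathbf{x}) = \|\mathbf{x}\|_0$ together with convexity and $\tilde{f}(0) = \tilde{f}(p_1) = \|\mathbf{x}\|_0$ would force $\tilde{f} \equiv \|\mathbf{x}\|_0$ on $[0, p_1]$, contradicting strict convexity of $f_\mathbf{x}$ on any closed subinterval of $(0, p_1)$. Hence no such $p_1$ exists and $p_\star = 0$ is the unique value in $[0,1]$.

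The main obstacle is the second case: strict convexity is asserted only on the half-open side $(0,1]$, so one cannot simply quote it at the endpoint $0$. The essential bridge is the extension-by-continuity step together with the observation that convexity of the extension is already enough to propagate equality at the two endpoints to equality on the whole subinterval, which is where the hypothetical interior minimizer finally collides with strict convexity in the interior.
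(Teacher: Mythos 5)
Your proposal is correct and follows essentially the same route as the paper, which presents this theorem as a summary of the immediately preceding remarks: existence via the compactness argument already carried out inside the proof of \cref{theo: s and sparifying transforms}, uniqueness from the strict convexity guaranteed by \cref{theo: convexity of s_p}, and the convention $p_\star = 0$ when $s(\mathbf{x}) = \|\mathbf{x}\|_0$ via \cref{eq: zero norm as a limit}. Your second case actually supplies a justification the paper leaves implicit---namely that when $p_\star = 0$ no interior $p_1 \in (0,1]$ can also attain the value $\|\mathbf{x}\|_0$, which you rule out correctly by extending $f_\mathbf{x}$ convexly to $[0,1]$ and colliding the resulting constancy on $[0,p_1]$ with strict convexity in the interior.
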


A couple of remarks are in order. The rather technical condition in \cref{theo: unique p_star}, for a given vector $\mathbf{x} \in \mathbb{R}^n$, that
\begin{equation} \label{eq: unicity of p condition}
\{ y_i : (y_1, y_2, \ldots, y_n)^\text{T} =
\Omega(\mathbf{x}), i \in \mathcal{I}_\star(\mathbf{x})\} \cap \{0,1\} = \emptyset,
\end{equation}
is necessary for there to be a unique value $p_\star$ such that $s(\mathbf{x}) = s_{p_\star}(\mathbf{x})$, is not uncommon when $\mathbf{x}$ is a random or semi-structured vector. We don't have a proof of this statement, but it is our empirical observation that all vectors $\mathbf{x}$ that are the transform of some {\em real life} vector $\mathbf{v} \in \mathbb{R}^n$, such as a {\em natural image}, under a unitary matrix $\mathbf{T} \in \mathbb{R}^{n \times n}$, satisfy the condition summarized in \cref{eq: unicity of p condition}. Moreover, even if the set of minimizers $\mathcal{P}_\star(\mathbf{x})$ of $f_\mathbf{x}(\cdot) = s_{(\cdot)}(\mathbf{x})$ is not a singleton, based on numerical experiments, we would venture the educated guess (not a conjecture-level claim, really) that $\{ i_\star(\mathbf{x},p) : p \in \mathcal{P}_\star(\mathbf{x})\}$ is a singleton.

Therefore, it is not too much to sacrifice, for the work ahead of us, to assume that all vectors $\mathbf{x}$ that we will encounter satisfy \cref{eq: unicity of p condition} or, even less restrictively, that $\{ i_\star(\mathbf{x},p) : p \in \mathcal{P}_\star(\mathbf{x})\}$ is a singleton. With these assumptions in mind then, we can talk of {\em the} value of $p_\star$ for which $s(\mathbf{x}) = s_{p_\star}(\mathbf{x})$, or {\em the} value of $i_\star$ for which $s(\mathbf{x}) = \#\{j : y_j \geq y_{i_\star}\} y_{i_\star}^p$, for any $p \in \mathcal{P}_\star(\mathbf{x})$, unambiguously.

\subsection{Error analysis and sparsity} \label{sec: error analysis and sparsity}
Consider a signal $\mathbf{v} \in \mathbb{R}^n$ and its transform $\mathbf{x} \in \mathbb{R}^n$ under $T : \mathbb{R}^n \rightarrow \mathbb{R}^n$, which maps $\mathbf{v} \mapsto \mathbf{x} = \mathbf{T} \mathbf{v}$, where $\mathbf{T} \in \mathbb{R}^{n \times n}$ is a unitary matrix. Assume that $\mathbf{x}$ has sparsity $s(\mathbf{x})$ with corresponding $p_\star \in [0,1]$ and integer $i_\star$, such that $s(\mathbf{x}) = s_{p_\star}(\mathbf{x}) = \#\{j : y_j \geq y_{i_\star}\} y_{i_\star}^{p_\star}$. Here, as usual, $\mathbf{y} = (y_1, y_2, \ldots, y_n)^\text{T} = \Omega(\mathbf{x})$ is the ordering of $\mathbf{x}$.

Imagine now that we choose an integer $i_0 < i_\star$ and form a vector $\tilde{\mathbf{x}}$ equal to $\mathbf{x}$ except that we zero out its $i_0$ smallest entries in absolute value. This implies that if $\tilde{\mathbf{y}} = \Omega(\tilde{\mathbf{x}})$, then $\tilde{y}_i = 0$ for $i \in \{1, 2, \ldots, i_0\}$, and $\tilde{y}_i = y_i$ for $i \in \{i_0+1, i_0 + 2, \ldots, n\}$. Moreover, since $\mathbf{T}$ is unitary, if we set $\tilde{\mathbf{v}} = \mathbf{T}^{-1} \tilde{\mathbf{x}}$, we then have that
\begin{align} \label{eq: square error upper bound}
&\|\mathbf{v} - \tilde{\mathbf{v}}\|_2^2 = \| \mathbf{x} - \tilde{\mathbf{x}}\|_2^2 
= \sum_{i = 1}^{i_0} y_i^2 \leq \sum_{i=1}^{i_0} y_{i_0}^2, \nonumber \\
\Rightarrow \quad & \left(\sqrt{\sum_{i = 1}^{i_0} y_i^2}\right)^{p_\star} 
\leq \sqrt{i_0}^{p_\star} y_{i_0}^{p_\star} < \sqrt{i_0}^{p_\star} y_{i_\star}^{p_\star}.
\end{align}
Since,
\begin{align} \label{eq: value of y to the p}
\|\mathbf{x}\|_{w\ell^{p_\star}}^{p_\star} &= (n - i_\star + 1) y_{i_\star}^{p_\star}, \nonumber \\
\Leftrightarrow \quad y_{i_\star}^{p_\star} 
&= \frac{\|\mathbf{x}\|_{w\ell^{p_\star}}^{p_\star}}{n - i_\star + 1},
\end{align}
we combine \cref{eq: square error upper bound} and \cref{eq: value of y to the p} to obtain,
\begin{align} \label{eq: mse bound by weaklp norm}
&\left(\sqrt{\sum_{i = 1}^{i_0} y_i^2}\right)^{p_\star} <
\sqrt{i_0}^{p_\star} \frac{\|\mathbf{x}\|_{w\ell^{p_\star}}^{p_\star}}{n - i_\star + 1}, \nonumber \\
\Leftrightarrow \quad & \sqrt{\sum_{i = 1}^{i_0} y_i^2} <
\frac{\sqrt{i_0}}{(n-i_\star+1)^{1/p_\star}} \|\mathbf{x}\|_{w\ell^{p_\star}}, \nonumber \\
\Leftrightarrow \quad & \frac{1}{n} \sum_{i = 1}^{i_0} y_i^2 <
\frac{i_0}{n(n-i_\star+1)^{2/p_\star}} \|\mathbf{x}\|_{w\ell^{p_\star}}^2.
\end{align}
Notice that the left hand side of \cref{eq: mse bound by weaklp norm} is the {\em mean squared error} between $\mathbf{x}$ and $\tilde{\mathbf{x}}$, noted $MSE(\mathbf{x},\tilde{\mathbf{x}})$, which is equal to $MSE(\mathbf{v},\tilde{\mathbf{v}})$ given that $\mathbf{T}$ is a unitary matrix. In terms of the {\em peak signal-to-noise ratio} between $\mathbf{v}$ and $\tilde{\mathbf{v}}$, or $PSNR(\mathbf{v},\tilde{\mathbf{v}})$, we have that, by definition \cite{TauMar2002},
\begin{align} \label{eq: psnr lower bound}
PSNR(\mathbf{v},\tilde{\mathbf{v}})
&= 10 \log_{10}\left(\frac{MAX^2}{MSE(\mathbf{v},\tilde{\mathbf{v}})}\right), \nonumber \\
\Rightarrow \quad PSNR(\mathbf{v},\tilde{\mathbf{v}})
&> 10 \log_{10}\left(\frac{n(n-i_\star+1)^{2/p_\star} MAX^2}
{i_0\, \|\mathbf{x}\|_{w\ell^{p_\star}}^2}\right),
\end{align}
which follows from \cref{eq: mse bound by weaklp norm}. Here, $MAX$ is the maximum absolute value that the entries of $\mathbf{v}$ can reach. \Cref{eq: psnr lower bound} gives us a lower bound for the peak signal-to-noise ratio between $\mathbf{v}$ and $\tilde{\mathbf{v}}$ in terms of the sparsity of its transform $\mathbf{x}$, $s(\mathbf{x}) = \|\mathbf{x}\|_{w\ell^{p_\star}}^{p_\star}$---with its corresponding values of $i_\star$ and $p_\star$---and the number $i_0$ of the smallest entries in absolute value of $\mathbf{x}$ that we decided to set to zero. We rewrite \cref{eq: psnr lower bound} to make this observation explicit in mathematical terms,
\begin{equation} \label{eq: psnr lower bound in terms of sparsity}
PSNR(\mathbf{v},\tilde{\mathbf{v}}) >
10 \log_{10}\left(\frac{n(n-i_\star+1)^{2/p_\star} MAX^2}
{i_0\, s(\mathbf{x})^{2/p_\star}}\right).
\end{equation}
The right hand side of \cref{eq: psnr lower bound in terms of sparsity} can be written as follows,
\begin{align} \label{eq: psnr lower bound detailed calculation}
10 \log_{10}\left(\frac{n(n-i_\star+1)^{2/p_\star} MAX^2}{i_0\, s(\mathbf{x})^{2/p_\star}}\right) =
10 \log_{10}\left(\frac{n^{2/p_\star}n^{1-2/p_\star}(n-i_\star+1)^{2/p_\star} MAX^2}{i_0\, s(\mathbf{x})^{2/p_\star}}\right), \notag\\
= 10 \log_{10}\left( \left(\frac{n}{s(\mathbf{x})}\right)^{2/p_\star}\frac{n^{1-2/p_\star}(n-i_\star+1)^{2/p_\star} MAX^2}{i_0}\right), \notag\\
= \underbrace{-\frac{20}{p_\star}\log_{10}\left(\frac{s(\mathbf{x})}{n}\right)}_{\sigma_1} + \underbrace{10\frac{p_\star - 2}{p_\star}\log_{10}(n)}_{\sigma_2} + \underbrace{\frac{20}{p_\star}\log_{10}(n - i_\star +1)}_{\sigma_3} \\
+ \underbrace{20\log_{10}(MAX)}_{\sigma_4} + \underbrace{-10\log_{10}(i_0)}_{\sigma_5}. \notag
\end{align}

From \cref{eq: psnr lower bound detailed calculation}, we can study the dependence of the lower bound of $PSNR(\mathbf{v},\tilde{\mathbf{v}})$ in \cref{eq: psnr lower bound in terms of sparsity} as a function of various parameters and quantities that we address in detail next.

From \cref{theo: properties of s}, we have that $0 \leq s(\mathbf{x})/n \leq 1$ and therefore $-\infty \leq \log_{10}(s(\mathbf{x})/n) \leq 0$. Hence, since $0 \leq p_\star \leq 1$, and assuming that $p_\star \neq 0$, the first summand $\sigma_1$ in \cref{eq: psnr lower bound detailed calculation} satisfies $0 \leq \sigma_1 \leq \infty$, $\lim_{s(\mathbf{x}) \rightarrow 0} \sigma_1 = \infty$, and $\sigma_1|_{s(\mathbf{x}) = n} = 0$. In other words, $\sigma_1$ can only but increase the lower bound of $PSNR(\mathbf{v},\tilde{\mathbf{v}})$; for a given value of $p_\star$, the smaller the value of the sparsity $s(\mathbf{x})$, the better; and its contribution is null when $s(\mathbf{x})=n$.

For the second summand $\sigma_2$, since $0 \leq p_\star \leq 1$, its contribution to the lower bound of $PSNR(\mathbf{v},\tilde{\mathbf{v}})$ can only be but a negative number. Observe that for a given $p_\star$, the contribution of this summand gets worse the larger the dimension $n$ is.

In the case of $\sigma_3$, for a given value of $p_\star$ and a dimension $n$, the smaller $i_\star$ the better.

The summand $\sigma_4$ is a constant that is dependent on the dynamic range of the signal, represented here by $MAX$. The bigger the dynamic range $MAX$, the larger the lower bound for $PSNR(\mathbf{v},\tilde{\mathbf{v}})$ is.

Finally, unsurprisingly, the last summand $\sigma_5$ is at best $\infty$ for $i_0 = 0$, presumably overruling all other terms making the lower bound for $PSNR(\mathbf{v},\tilde{\mathbf{v}})$ infinite, since we are not removing any terms of the representation $\mathbf{x}$ of $\mathbf{v}$ under $\mathbf{T}$, resulting in $\mathbf{v} = \tilde{\mathbf{v}}$. However, its contribution is null when $i_0=1$, and progressively worse as $i_0$ increases, with a worse case scenario of $\sigma_5 = -10 \log_{10}(n)$.

\subsection{The sparsity index \texorpdfstring{\boldmath$\mathcal{E}$}{E}} \label{sec: sparsity index}
Motivated by the results in \cref{sec: error analysis and sparsity} that show the link that exists between the error incurred when we truncate the smallest coefficients in absolute value of a unitary representation of a signal and the sparsity of its transform, we define what we call the sparsity index $\mathcal{E}$. In \cref{sec: raionale for E} we provide some background information and ideas that form the basis for the rationale of the choice of the discrete cosine transform (DCT) \cite{AhmNatRao1974,RaoYip1990} as the unitary transform at the core of its definition. The definition of $\mathcal{E}$ that we give below is geared for use on 2D-data because we are interested in image processing. However, it can be modified for 1D-data, for use in time-series, for example, in a natural way by simply taking the 1D DCT instead of the 2D DCT.
\begin{definition}[Sparsity index $\mathcal{E}$]
\label{def: sparsity index}Let $k,l \in \mathbb{N}^*$, $n=k \times l$, and $\mathbf{V}$ be a matrix in $\mathbb{R}^{k \times l}$. Let $\mathbf{X} = dct2(\mathbf{V})$ be the 2D DCT of $\mathbf{V}$, and let $\mathbf{x} \in \mathbb{R}^n$ be the vector that results from stacking the columns of $\mathbf{X}$. We then define the sparsity index $\mathcal{E}$ of $\mathbf{V}$ as
\begin{equation} \label{eq: sparsity index}
\mathcal{E}(\mathbf{V}) = \frac{s(\mathbf{x})}{n}.
\end{equation}
\end{definition}
Observe that for any $\mathbf{V} \in \mathbb{R}^{k \times l}$, we have that, by \cref{theo: properties of s}, $0 \leq \mathcal{E}(\mathbf{V}) \leq 1$.

We will show how to use the sparsity index $\mathcal{E}$ to predict the error in the reconstruction of a signal in the setting of compressed sensing in \cref{sec: discussion}.

\section{A compressed sensing example: The single pixel camera} \label{sec: compressed sensing}
The explosion of activity in the field of sparse and redundant representations, spanning two decades by 2010, gave us algorithms, and theoretical results guaranteeing their performance, to approximate the sparsest solutions\footnote{Sparsity here refers to the usual $\| \cdot \|_0$ notion of sparsity, not the one embodied in our function $s$, see \cref{def: sparsity s}.} of linear systems of equations \cite{elad2010}. One of the applications in this field is compressed sensing, which we described briefly in \cref{sec: introduction}. We use this signal processing technique in the context of images next.

\subsection{Background} \label{sec: background}
A modern consumer camera typically contains a single charge-coupled device (CCD) or complementary metal-oxide semiconductor (CMOS) sensor that captures light on a regular grid of picture elements, called pixels. The intensity of light falling on each individual pixel is translated into a numerical value, and theses quantities are in turn processed to render an image.

A Bayer color filter array (CFA) \cite{Bay1976} is typically used on the surface of the sensor to obtain red, green, and blue color light sample values at specific pixel locations. The processing of scenes captured with a Bayer CFA requires the extra processing step of demosaicing to produce full color images, as opposed to the simpler case of processing grayscale images, cf., \cite{PlaVen2000}.

In today's age of megapixel cameras, we can cheaply manufacture a sensor with millions of pixels that is sensitive to the visible light spectrum. Problems arise when we desire similar resolutions for light spectra where the sensors are much more expensive, e.g., infrared or ultra-violet sensors.

Richard G.\ Baraniuk et al.\ have constructed a single-pixel camera using a digital micromirror device (DMD) and compressed sensing techniques to produce grayscale images, see \cite{DuDaTaLaSuKeBa2008}. In order to go beyond DMD technology, new concepts and designs for the construction of a single-pixel camera that utilizes a liquid crystal display (LCD) are underway. The idea is to simulate a pixel grid with an LCD filter and ``sum up'' the resulting light that comes through it with only a single-pixel light sensor. One hopes that if this is performed correctly, we can obtain a resolution equal to that of the LCD display.

\subsection{The experiment and its mathematical modeling} \label{sec: experiment}

\subsubsection{Naive sensing} \label{sec: naive sensing}

The experimental design that we use is outlined in \cref{fig: experimental setup}. Ambient light reflects off the target and passes through the front lens. This lens focuses the light into a beam which is directed at an LCD. The LCD is a grid of squares, say 1024 by 768, which are equivalent to pixels in a sensor. Each square can be switched on or off. This either allows light to pass through, or not, said square, respectively. The total admitted light is captured by the back lens, which then concentrates the light into a single-pixel CCD or CMOS sensor. This sensor counts the incoming photons and gives out a corresponding output voltage, which can be measured.

\begin{figure}[htp]
    \centering
    \includegraphics[width=0.62\textwidth]{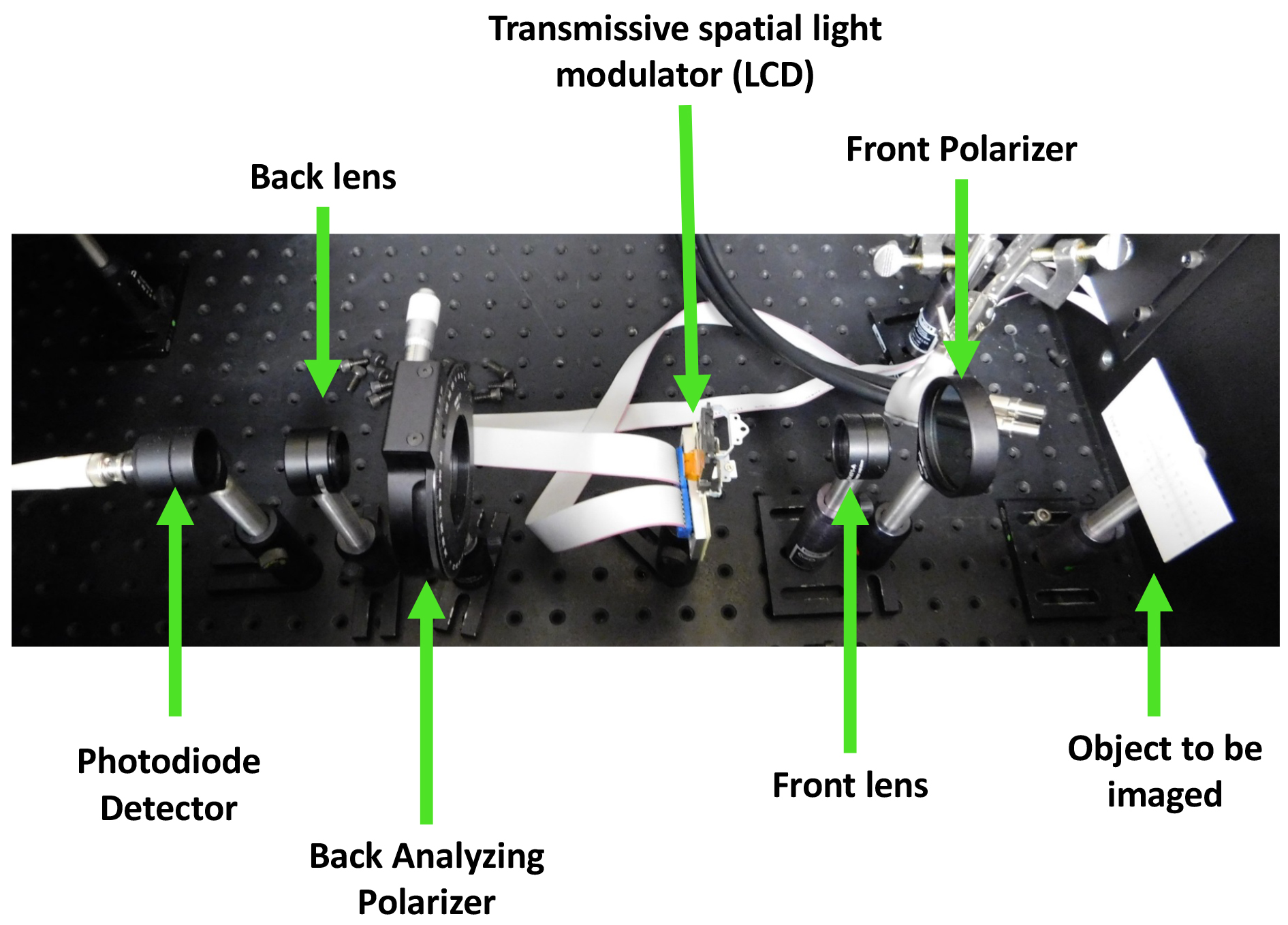}
    \caption{Physical experimental design. Photo modified from an original provided courtesy of Dr. David Bowen, Laboratory for Physical Sciences.}
    \label{fig: experimental setup}
\end{figure}

To model this mathematically, first we can imagine our image as a function $f: \Omega \to \mathbb{R}$,
where $\Omega \subseteq \mathbb{R}^2$ can be thought of as the plane containing the image to be captured and the output $f(\mathbf{q})$ can be thought of as the intensity of light per surface unit, i.e., intensity density, at point $\mathbf{q} \in \Omega$. We can then discretize the image by splitting it into a rectangular grid of size $m \times n$, where each rectangle corresponds to a single image pixel. We determine the value of light intensity at pixel $(i,j)$ by computing the integral
\begin{equation*}
    v_{i,j} = \int_{Q_{i,j}} f(\mathbf{q}) \, d\mathbf{q},
\end{equation*}
where $Q_{i,j} \subset \Omega$ is the square on the image that corresponds to pixel $(i,j)$. In other words, $v_{i,j}$ is the light intensity of the image at square $Q_{i,j}$, which is the image sample value associated to pixel $(i,j)$. The LCD can be modeled as a vector $\mathbf{p} = (p_0, p_1, \ldots, p_{mn-1})^\text{T}$ of length $mn$, where each entry $p_{k}$ is either 0 or 1. An entry $p_k = 1$ corresponds to letting the light from $Q_{i,j}$ pass through, and $p_k = 0$ corresponds to blocking it. Note that we have implicitly defined a bijection $(i,j) \xleftrightarrow{vec} k$, where $vec$ maps the coordinates $(i,j)$ to their corresponding position $k$ in $\mathbf{p}$. The sensor at the end captures the total light intensity of all the image squares that were not blocked by mask $\mathbf{p}$. Its value is given by the sum,
\begin{equation} \label{eq: light measurement given mask p}
y_\mathbf{p} = \sum_{i = 0}^{m-1} \sum_{j = 0}^ {n-1} v_{i,j} p_{vec(i,j)}.
\end{equation}

It is easy to see from \cref{eq: light measurement given mask p} that if we set $p_{vec(0,0)} = 1$, and $p_{vec(i,j)} = 0$ for $i,j \neq 0$, then $y_{\mathbf{p}} = v_{0,0}$. Making a similar arrangement for all possible coordinate pairs $(i,j)$ we can construct $mn$ vectors $\mathbf{p}$ that  allow us to recover all the $mn$ values $\{v_{i,j}\}$ necessary to recover the discretized image implied by $f$ when using an $m \times n$ LCD. However, this is not very efficient. We can do much better than this.

\subsubsection{Compressed sensing} \label{sec: compressed sensing details}
With a slight adjustment to the formulation given at the end of \cref{sec: naive sensing}, our experimental setup fits into the framework of compressed sensing, which we described in \cref{sec: introduction}. The overall goal is to take $K \ll mn$ measurements and still recover the discretized image described by the set of values $\{v_{i,j}\}$. We organize $\{v_{i,j}\}$ into a vector $\mathbf{v} \in \mathbb{R}^M$, where $M = mn$, utilizing the bijection $vec$, introduced in \cref{sec: naive sensing}, by setting $\mathbf{v} = (v_0, v_1, \ldots, v_{mn-1})^\text{T}$, where $v_{k} = v_{i,j}$ if and only if $k = vec(i,j)$. Then we can write \cref{eq: light measurement given mask p} as the inner product
\begin{equation*}
y_\mathbf{p} = \langle \mathbf{v},\mathbf{p} \rangle = \mathbf{p}^\text{T} \mathbf{v}.
\end{equation*}
We repeat this process to collect $K$ samples $\{y_{\mathbf{p}_l}\}$, with $K$ distinct $\{\mathbf{p}_l\}$ sampling masks, each corresponding to a different LCD configuration. We write in condensed form all $K$ measurements in matrix notation
\begin{equation*}
\mathbf{y} = \mathbf{P}^\text{T} \mathbf{v},
\end{equation*}
where $\mathbf{P} = (\mathbf{p}_0\ \mathbf{p}_1\ \ldots\ \mathbf{p}_{K-1})$ is the measurement matrix formed with the $K$ column vectors $\{\mathbf{p}_l\}$ corresponding each to a different LCD configuration.

Finally, if there is a basis or frame $\mathbf{A} \in \mathbb{R}^{M \times N}$, with $M \leq N$, where image $\mathbf{v}$ has a sparse representation, then we can formulate the problem of reconstructing $\mathbf{v}$ as a compressed sensing problem:

Given $\mathbf{y} \in \mathbb{R}^K$ and $\mathbf{P} \in \mathbb{R}^{M \times K}$, find $\mathbf{x}^\star \in \mathbb{R}^N$ solving
\begin{equation} \label{eq: compressed sensing for single pixel camera}
\min_{\mathbf{x} \in \mathbb{R}^N} f(\mathbf{x}), \quad \text{subject to } 
\left\| \mathbf{y} - \mathbf{P}^\text{T} \mathbf{A} \mathbf{x} \right\|_2 = 0,
\end{equation}
with $f(\mathbf{x}) = \|\mathbf{x}\|_0$, or $f(\mathbf{x})= \|\mathbf{W}\mathbf{x}\|_1$, where $\mathbf{W}$ is the diagonal matrix with $i$th diagonal entry $w_i = \|(\mathbf{P}^\text{T}\mathbf{A})_i\|_2$, the $\ell^2$-norm of the $i$th column of $\mathbf{P}^\text{T}\mathbf{A}$.

If image $\mathbf{v}$ is $\epsilon >0$ close to a sparse representation in the basis or frame $\mathbf{A}$ with at most $s$ nonzero elements, then the theory of compressed sensing can guarantee that we can find a solution $\mathbf{x}^\star$ to \cref{eq: compressed sensing for single pixel camera} provided $K = O(s \log(N/s))$. Algorithms for finding such minimizing $\mathbf{x}^\star$ include the Orthogonal Matching Pursuit (OMP) algorithm \cite{PatRezKri1993,MalZha1993}, when $f(\mathbf{x}) = \|\mathbf{x}\|_0$; and the Basis Pursuit (BP) algorithm \cite{CheDonSau1998,BruDonEla2009,elad2010}, when $f(\mathbf{x})= \|\mathbf{W}\mathbf{x}\|_1$, for example.

In the following sections we proceed to solve \cref{eq: compressed sensing for single pixel camera} and compare the results obtained by OMP and BP. We then demonstrate how we can use the sparsity index $\mathcal{E}$ in conjunction with the BP algorithm to decide when to increase the sampling rate, i.e., the number of measurements, to improve the reconstruction of the original image without prior knowledge of it.

\subsection{Solving the single pixel camera compressed sensing problem} \label{sec: solving the single pixel camera problem}
To define and solve \cref{eq: compressed sensing for single pixel camera}, we need to specify $\mathbf{P} \in \mathbb{R}^{M \times K}$ and $\mathbf{A} \in \mathbb{R}^{M \times N}$. Given an LCD of size $m \times n$ pixels, we note that it is customary to partition an image in smaller image blocks for individual processing, as is done in JPEG image compression \cite{TauMar2002,aust2008}, and therefore for our experiments we will partition the image in $l \times l$ image blocks. With this setting, from \cref{sec: compressed sensing details}, we have that $M = l \times l = l^2$, and $K \leq M \leq N$. 

 Let $\mathbf{e}_{i,j} \in \mathbb{R}^{l \times l}$ be the basis element matrix for $\mathbb{R}^{l \times l}$ with a value of one at column $i$ and row $j$, and zeros everywhere else. Let $\mathbf{A}_{i,j} = idct2(\mathbf{e}_{i,j}) \in \mathbb{R}^{l \times l}$ correspond to the 2D inverse discrete cosine transform matrix of $\mathbf{e}_{i,j}$, and let $\mathbf{a}_{i,j} \in \mathbb{R}^N$ be the vector that results from stacking the columns of $\mathbf{A}_{i,j}$. Finally, traverse the indices $i$ and $j$ in column-major-order, and define $\mathbf{A} = (\mathbf{a}_{i,j})$ as the matrix with column vectors $\mathbf{a}_{i,j}$, in that order. This setup implies that $N = l^2$, and therefore $M = N$. This gives an invertible $\mathbf{A} \in \mathbb{R}^{N \times N}$, putting us in the context of transform coding. \cref{fig: dct2 basis elements} shows in row-major-order the columns of $\mathbf{A}$ reshaped as $8  \times 8$ image blocks when setting $l = 8$. In this case, these blocks correspond to the basis elements used in the JPEG standard \cite{TauMar2002,aust2008}.

\begin{figure}[htb]
\centering
\includegraphics[width=.4\textwidth]{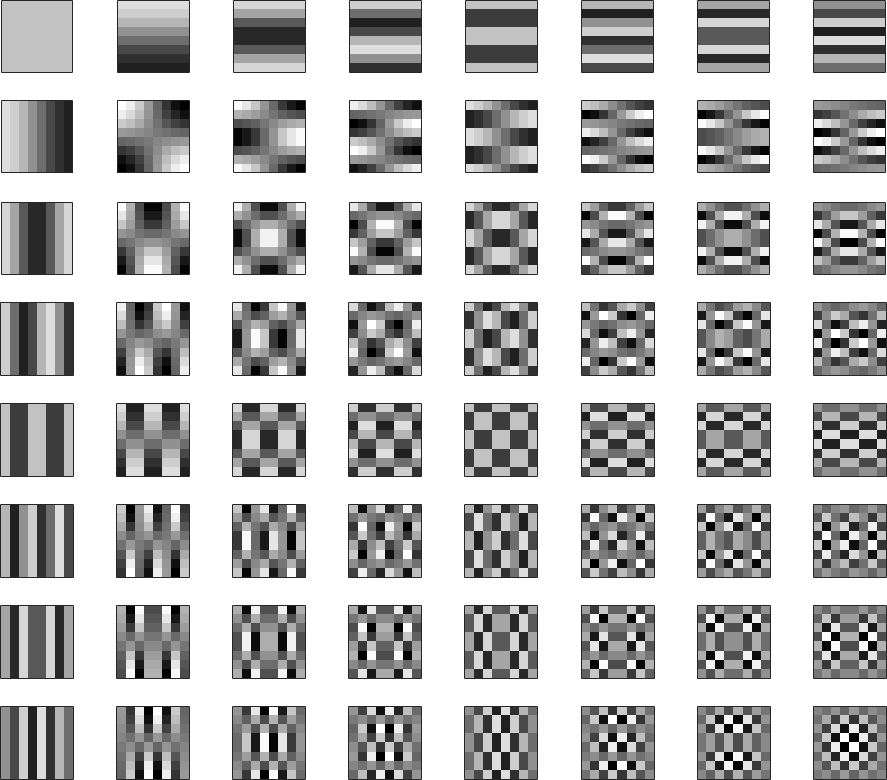}
\caption{2D discrete cosine transform analysis basis for $8 \times 8$ image blocks, as used in the JPEG standard.}
\label{fig: dct2 basis elements}
\end{figure}

Regarding the choice of $\mathbf{P} \in \mathbb{R}^{N \times K}$, we set $\mathbf{P}$ to be a matrix with entries chosen at random from $\{0,1\}$, i.e., $\mathbf{P}$ is the realization of a matrix whose entries are samples of the uniform probability distribution on $\{0,1\}$, denoted by $\mathcal{U}\{0,1\}$. Note that this choice is not entirely capricious as it relates to the experiment at hand and its corresponding modeling as described in \cref{sec: experiment}. With $\mathbf{P}$ set as above, we are choosing to have any one pixel element of the LCD to be either ``fully light transmitting" (value of 1) or ``fully light blocking" (value of 0). However, given that an LCD can have partial light transmission per pixel, if there were 256 possible uniform step values in such LCD, we could instead have picked the entries of $\mathbf{P}$ to be chosen at random from $\mathcal{U}\left\{0, \frac{1}{255}, \frac{2}{255}, \ldots, 1\right\}$, for example, but we stick with our choice of $\mathbf{P}$ above for the remainder of this work.

Now that we have established what $\mathbf{P}$ and $\mathbf{A}$ are, we are ready to briefly describe well known algorithms that attempt to solve \cref{eq: compressed sensing for single pixel camera} and which, under certain provable conditions, will converge to $\mathbf{x}^\star$, a solution of \cref{eq: compressed sensing for single pixel camera}. We will mention these conditions when presenting each algorithm in \cref{sec: omp,sec: bp}.

\subsubsection{Orthogonal Matching Pursuit (OMP)} \label{sec: omp}
The Orthogonal Matching Pursuit (OMP) algorithm was first presented in \cite{PatRezKri1993} and has since been refined a variety of times by multiple authors. This algorithm aims to find a solution $\mathbf{x}^\star$ to \cref{eq: compressed sensing for single pixel camera} when we set $f(\mathbf{x}) = \|\mathbf{x}\|_0$, i.e., when we want to solve for the vector with the smallest $\ell_0$-norm that explains the measurement vector $\mathbf{y}$. The basic idea of this algorithm is to iteratively add a nonzero entry to the previous vector in the iteration, starting with the zero vector, in a way that at each iteration the residual error, $\mathbf{r}^k = \mathbf{y} - \mathbf{P}^\text{T}\mathbf{A}\mathbf{x}^k$, is as small as possible. We reproduce the version found in \cite{BruDonEla2009} for convenience in \cref{alg: omp}.

\begin{algorithm}
  \caption{Orthogonal Matching Pursuit.
    \label{alg: omp}}
\begin{algorithmic}[0] 
\Require{A full rank matrix $\Phi$, a measurement vector $\mathbf{y}$, and an error threshold $\epsilon \geq 0$.}
\Ensure{A vector $\mathbf{x}^k$ such that $\| \mathbf{y} - \Phi \mathbf{x}^k\|_2 \leq \epsilon$, with $k \leq rank(\Phi)$.}
	\Statex
	\Function{OMP}{$\Phi, \mathbf{y}, \epsilon$}
	\State {\em Initialization.} $k \gets 0$, $\mathbf{x}^0 \gets \mathbf{0}$, $\mathbf{r}^0 \gets \mathbf{y} - \Phi \mathbf{x}^0 = \mathbf{y}$, $\mathcal{S}^0 \gets \emptyset$.
	\Repeat
	\State $k \gets k + 1$.
	\State {\em Sweep.} Compute the errors $\epsilon(j) = \min_{z_j} \| z_j \phi_j - \mathbf{r}^{k-1}\|_2$ for all $j$ using the optimal choice $z_j^\star = \phi_j^\text{T}\mathbf{r}^{k-1}/\|\phi_j\|_2^2$. \Comment{$\phi_j$ is the $j$th column of $\Phi$.}
	\State {\em Update support.} Find a minimizer $j_0$ of $\epsilon(j)$: $\forall\ j \notin \mathcal{S}^{k-1},\ \epsilon(j_0) \leq \epsilon(j)$, and update $\mathcal{S}^k = \mathcal{S}^{k-1} \cup \{j_0\}$.
	\State {\em Update provisional solution.} Compute $\mathbf{x}^k$, the minimizer of $\|\mathbf{y} - \Phi \mathbf{x}\|_2$ subject to $Support(\mathbf{x}) = \mathcal{S}^k$. \Comment{$Support(\mathbf{x}) = \left\{ j \in \mathbb{N} : x_j \neq 0, \mathbf{x} = (x_1, x_2, \ldots, x_N)^\text{T}\right\}$.}
	\State {\em Update residual.} Compute $\mathbf{r}^k = \mathbf{y} - \Phi \mathbf{x}^k$.
	\Until{$\|\mathbf{r}^k\|_2 \leq \epsilon$.}
	\State \Return{$\mathbf{x}^k$}
	\EndFunction
\end{algorithmic}
\end{algorithm}

A couple of remarks are in order. Solving \cref{eq: compressed sensing for single pixel camera} is NP-hard \cite{nata1995}, so it comes with no surprise that we can find examples of measurement vectors $\mathbf{y} \in \mathbb{R}^K$ where a solution $\mathbf{x}^\star$ to \cref{eq: compressed sensing for single pixel camera} for $f(\mathbf{x}) = \|\mathbf{x}\|_0$ satisfies $\|\mathbf{x}^\star\|_0 < K$, yet $\mathbf{x}_{omp} = \text{OMP}(\mathbf{P}^\text{T}\mathbf{A},\mathbf{y},0)$ will be such that $\|\mathbf{x}_{omp}\|_0 = K$. This follows from the readily verifiable fact that \cref{alg: omp} is of polynomial order whereas OMP is NP-hard, as mentioned before.

In \cite{BruDonEla2009}, for example, we can find conditions that guarantee when OMP will find an actual solution $\mathbf{x}^\star$ to \cref{eq: compressed sensing for single pixel camera} for $f(\mathbf{x}) = \|\mathbf{x}\|_0$, as well as when $\mathbf{x}^\star$ will be unique. For brevity we refer to that text for details.

\subsubsection{Basis Pursuit (BP)} \label{sec: bp}
Solving \cref{eq: compressed sensing for single pixel camera} when $f(\mathbf{x})= \|\mathbf{W}\mathbf{x}\|_1$ is called Basis Pursuit (BP) \cite{BruDonEla2009}. The matrix $\mathbf{W}$ gives all column vectors $\{\mathbf{d}_i\}_{i=1}^N$ of $\mathbf{P}^\text{T}\mathbf{A}$ an equal weight. It is a diagonal matrix whose $i$th diagonal element is given by $w_i = \|\mathbf{d}_i\|_2$. Without it, columns with larger $\ell^2$-norms tend to be penalized and their coefficients set to zero, or very small, biasing the solution. Notice that $f(\mathbf{x})= \|\mathbf{W}\mathbf{x}\|_1$ is a convex function, and therefore the vast literature on convex optimization with constraints can be brought fore to solve \cref{eq: compressed sensing for single pixel camera} in this case. Our particular approach to solving the compressed sensing problem under these circumstances is detailed in \cref{sec: convex minimization with constraints}.

\section{Numerical experiments and discussion} \label{sec: discussion}
Having defined the problem of compressed sensing in the context of images in \cref{sec: compressed sensing}, we are ready to test the sparsity index $\mathcal{E}$, see \cref{def: sparsity index}, derived from our new notion of sparsity $s$, see \cref{def: sparsity s}, and we show how $\mathcal{E}$ can be used to predict the quality of the reconstruction of an image via compressed sensing without prior knowledge of the original.

This is how we conduct our experiments. Given an image defined by its $\{ v_{i,j}\}$ pixel values, we  assume that we can subdivide it in blocks of size $l \times l$, where $l$ is to be defined shortly. If the original image's width or height were not divisible by $l$, we could extend the image to the right and bottom in a way that is similar to what is done in \cite{Kap1997}, for example, to make them both divisible by $l$. Then we process each image block, say $\mathbf{V}_k$, by first stacking from left to right its columns of pixel values to obtain a vector $\mathbf{v}_k \in \mathbb{R}^N$, where $N = l^2$ in this case. With the matrices $\mathbf{P} \in \mathbb{R}^{N \times K}$ and $\mathbf{A} \in \mathbb{R}^{N \times N}$ as defined in \cref{sec: solving the single pixel camera problem}, we proceed to obtain a measurement vector $\mathbf{y}_k = \mathbf{P}^\text{T}\mathbf{v}_k$ and solve for $\mathbf{x}^\star_k$ in \cref{eq: compressed sensing for single pixel camera} for either $f(\mathbf{x}) = \|\mathbf{x}\|_0$ or $f(\mathbf{x})= \|\mathbf{W}\mathbf{x}\|_1$, with OMP or BP, respectively. We need to define the value of $K$, and so we choose arbitrarily to set $K=0.75 N$, that is $\frac{3}{4}$ the number of measurements needed to directly reconstruct the image vector $\mathbf{v}_k$ from the measurement vector $\mathbf{y}_k$. Finally, we obtain the compressed sensing reconstruction image block $\widetilde{\mathbf{V}}_k$ by stacking into $l$ columns of length $l$ the sequential values of the reconstruction vector $\widetilde{\mathbf{v}}_k = \mathbf{A}\mathbf{x}^\star_k$. After processing and putting in place all the blocks in their respective order, we compare the original image with its reconstruction $\{\widetilde{v}_{i,j}\}$, possibly trimming first from both images any extensions to the right and bottom that we might have added to the original to make its dimensions divisible by $l$.

In preliminary experiments we compared the performance of BP for $l = 8, 16, 32, \text{and } 64$, and we found that $l=32$ gives both the fastest and best reconstruction---as measured in seconds and dB for PSNR, respectively. Hence we set $l=32$. This results in having $N=1024$, and $K=768$. For $l=32$, the time it takes to process an image block with OMP to solve \cref{eq: compressed sensing for single pixel camera} when $f(\mathbf{x}) = \|\mathbf{x}\|_0$ is just too long to make it a practical method, not to mention that the reconstruction quality obtained with OMP is worse both in PSNR and Mean Structural Similarity Index (MSSIM) \cite{WanBovSheSim2004} than the one obtained with BP. Therefore we conducted all of our experiments exclusively with BP. See \cref{fig: reconstruction comparison}.

\begin{figure}[htbp]
  \centering
  \subfloat[Original]{\label{fig: original}\includegraphics[width=.3\linewidth]{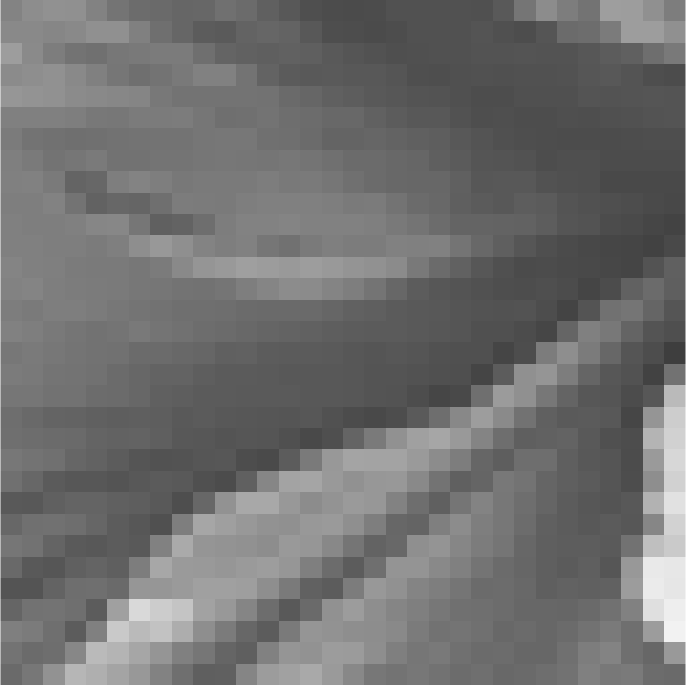}}
  \qquad
  \subfloat[BP reconstruction]{\label{fig: bp reconstruction}\includegraphics[width=.3\linewidth]{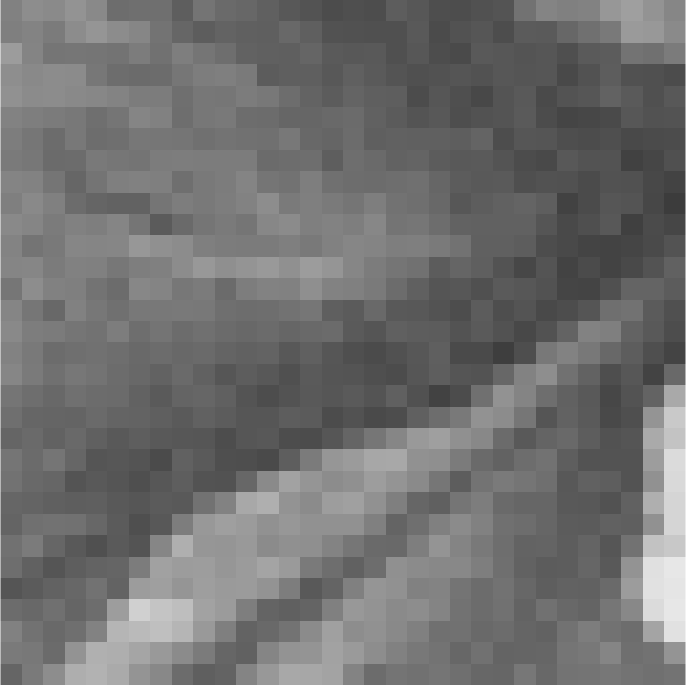}}
  \qquad
  \subfloat[OMP reconstruction]{\label{fig: omp reconstruction}\includegraphics[width=.3\linewidth]{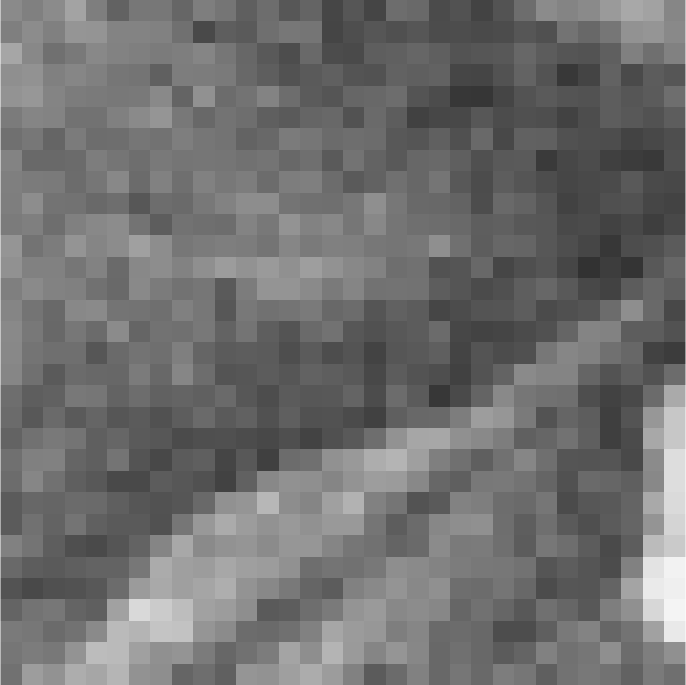}}
  \caption{Reconstruction from 75\% of the required samples for perfect reconstruction of a $32 \times 32$ pixel image by means of \protect\subref{fig: bp reconstruction} basis pursuit (BP), with errors of 31.970006 dB in PSNR and 0.843123 in MSSIM; and \protect\subref{fig: omp reconstruction} orthogonal matching pursuit (OMP), with errors of 26.976103 dB in PSNR and 0.671464 in MSSIM. The BP method is better and faster than OMP by 4.993903 dB, with an average speedup of 8.14, when processing \protect\subref{fig: original}. The parameters for the line search, \cref{alg: backtracking line search}, were set to $\alpha = 0.001$ and $\beta = 0.75$. The entries of the sampling matrix $\mathbf{P} \in \{0,1\}^{1024 \times 768}$ are drawn at random from $\mathcal{U}\{0,1\}$.}
\label{fig: reconstruction comparison}
\end{figure}

The data set of images that we used corresponds to the luminance of the first 24 reference images in the {\em TID2008} image database \cite{PonLukZelEgiCarBat2009}. Each original image is a color image of 512 pixels wide by 384 pixels tall, $384 \times 512$ pixels in matrix form. This size is convenient for our purposes since it results in 192 disjoint $32 \times 32$ image blocks, and therefore we don't need to extend the original images in any form for blocking and processing. For our experiments, we combined the color information of each image block into a single vector by computing its luminance. The luminance $Y$ of an image is computed from its red $R$, green $G$, and blue $B$ channels with the formula \cite{TauMar2002,aust2008},
\begin{equation*}
Y = 0.299\, R + 0.587\, G + 0.114\, B.
\end{equation*}

We observe that by the choice of $\mathbf{A}$ that we have made, the solution $\mathbf{x}^\star_k$ obtained either by BP or OMP for a given measurement vector $\mathbf{y}_k$ has as coordinates all of the 2D DCT coefficients of the respective reconstruction block $\widetilde{\mathbf{V}}_k$, and therefore, by \cref{def: sparsity index}, the sparsity index of $\widetilde{\mathbf{V}}_k$ is given by
\begin{equation*}
\mathcal{E}(\widetilde{\mathbf{V}}_k) = \frac{s(\mathbf{x}^\star_k)}{N}.
\end{equation*}

If we pay attention for a moment to the solution $\mathbf{x}^\star_k$ given by OMP, we observe that we must necessarily have $\|\mathbf{x}^\star_k\|_0 \leq K$. This is easily verifiable by the design of OMP, see \cref{alg: omp}, and the fact that $rank(\mathbf{P}^\text{T}\mathbf{A}) = K$. Then, using the terminology from \cref{sec: error analysis and sparsity}, we must have that the number $i_0$ of entries equal to zero in $\mathbf{x}^\star_k$ must satisfy $i_0 = N - \|\mathbf{x}^\star_k\|_0 \geq N-K > 0$. Recalling the definitions of $i_\star$ and $p_\star$ in \cref{theo: unique p_star}, if $i_0 < i_\star = i_\star(\mathbf{x}_k,p_\star)$, where $\mathbf{x}_k$ is the vector that results from stacking the columns of the 2D DCT transform of $\mathbf{V}_k$, then the results from \cref{sec: error analysis and sparsity} apply and we can estimate a lower bound for $PSNR(\mathbf{V}_k,\widetilde{\mathbf{V}}_k)$ provided $\mathbf{x}_k$ and $\mathbf{x}^\star_k$ are {\em close enough}, by substituting in \cref{eq: psnr lower bound detailed calculation} $\mathbf{x}_k$ with $\mathbf{x}^\star_k$, and $s(\mathbf{x}_k)/N = \mathcal{E}(\mathbf{V}_k)$ with $s(\mathbf{x}^\star_k)/N = \mathcal{E}(\widetilde{\mathbf{V}}_k)$. Under these assumptions, we must have that a lower bound for $PSNR(\mathbf{V}_k,\widetilde{\mathbf{V}}_k)$ is then approximated by
\begin{multline} \label{eq: lower bound approximation}
-\frac{20}{p_\star}\log_{10}\left(\mathcal{E}(\widetilde{\mathbf{V}}_k)\right)
+ 10\frac{p_\star - 2}{p_\star} \log_{10}(N)
+ \frac{20}{p_\star} \log_{10}(N - i_\star +1) \\
+ 20 \log_{10}(MAX)
- 10 \log_{10}(N - \|\mathbf{x}^\star_k\|_0).
\end{multline}

We focus our attention on the first term of \cref{eq: lower bound approximation}, $-20\log_{10}(\mathcal{E}(\widetilde{\mathbf{V}}_k))/p_\star$,  by noting that since $0 \leq \mathcal{E}(\widetilde{\mathbf{V}}_k) \leq 1$, the smaller the value of $\mathcal{E}(\widetilde{\mathbf{V}}_k)$, the larger the lower bound of $PSNR(\mathbf{V}_k,\widetilde{\mathbf{V}}_k)$ will be, provided all other variables are held constant. Notice also that the estimate in \cref{eq: lower bound approximation} uses information obtained exclusively from the solution $\mathbf{x}^\star_k$ of the compressed sensing problem \cref{eq: compressed sensing for single pixel camera}, which is derived from the measurement vector $\mathbf{y}_k$. That is, no knowledge of the original image block $\mathbf{V}_k$ is required, except for the partial information derived from it by means of our sampling matrix $\mathbf{P}$.

This analysis leads us to propose the following statistical hypothesis. We claim that, with high probability, whenever the sparsity index $\mathcal{E}$ is less than or equal to a certain threshold $t_0$, the peak signal to noise ratio of the reconstruction will be greater than or equal to a given decibel value $r_0$, and vice versa. This proposition is equivalent to its converse hypothesis $H$,
\begin{equation} \label{eq: hypothesis}
H : \mathcal{E}\big(\widetilde{\mathbf{V}}_k\big) > t_0 \Leftrightarrow PSNR\big(\mathbf{V}_k,\widetilde{\mathbf{V}}_k\big) < r_0,
\end{equation}
where $t_0$ is the threshold that corresponds to a PSNR of $r_0$ dB for a particular reference image block. We set $t_0 = \mathcal{E}(\widetilde{\mathbf{V}}_0) = 0.734166$ and $r_0 = PSNR(\mathbf{V}_0,\widetilde{\mathbf{V}}_0) = 31.970006$ dB, where $\widetilde{\mathbf{V}}_0$ is the reconstruction obtained with BP from the measurements taken from the image block $\mathbf{V}_0$, the $32 \times 32$ image block in I07.BMP with top-left coordinates $(225,129)$. This image block is shown in \cref{fig: original}. We choose it as our reference image block because its reconstruction errors are close to 32 dB in PSNR, and 0.85 in MSSIM, which are both acceptable for images.

\begin{figure}[htbp]
  \centering
  \subfloat[Original (I07.BMP)]{\label{fig: i07orig}\includegraphics[width=.3\linewidth]{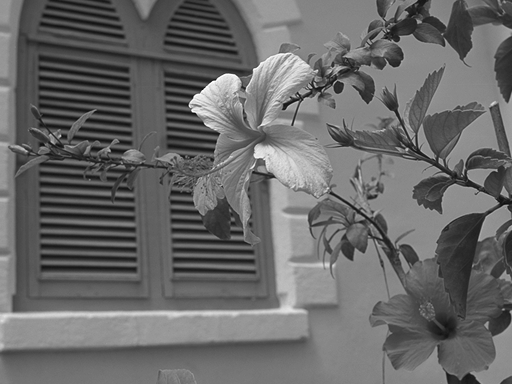}}
  \
  \subfloat[BP reconstruction]{\label{fig: i07rec}\includegraphics[width=.3\linewidth]{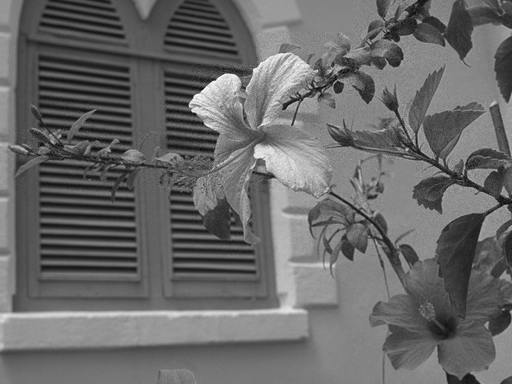}}
  \
  \subfloat[$SSIM$ map]{\label{fig: i07ssim}\includegraphics[width=.3\linewidth]{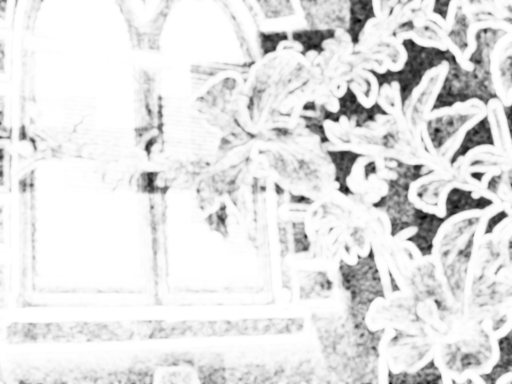}}
  \linebreak
  \subfloat[$\mathcal{E}$]{\label{fig: 07-E}\includegraphics[width=.3\linewidth]{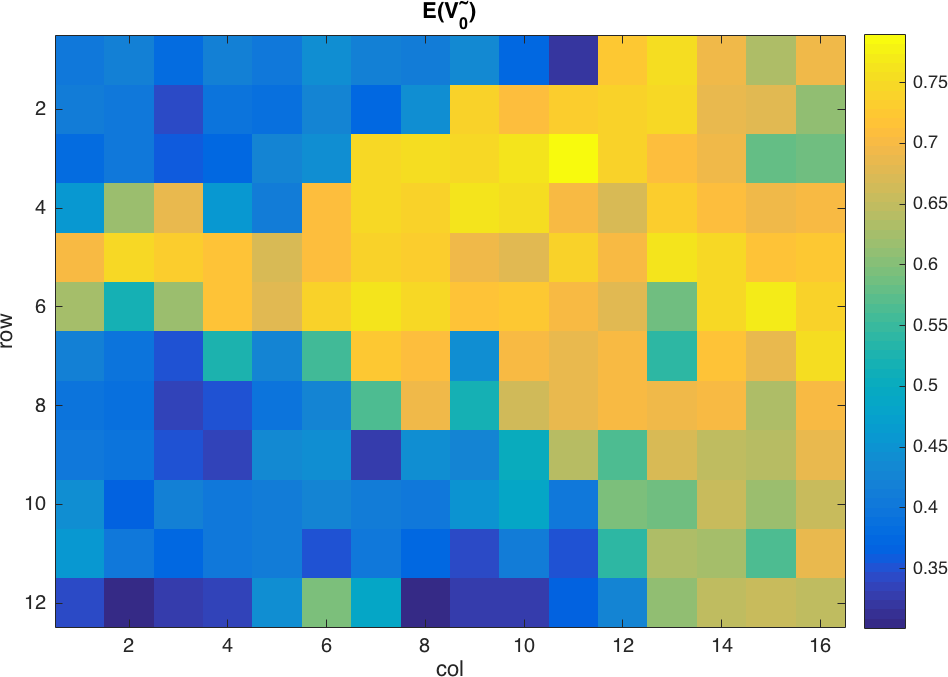}}
  \
  \subfloat[$PSNR$]{\label{fig: 07-psnr}\includegraphics[width=.3\linewidth]{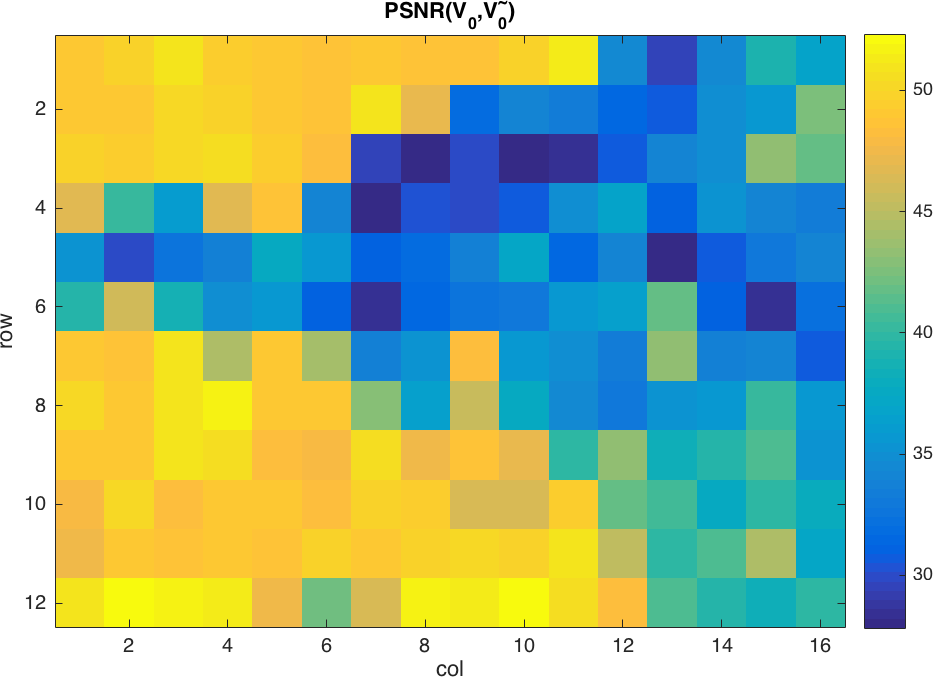}}
  \linebreak
  \subfloat[$\mathcal{E}>t_0$]{\label{fig: 07-E>}\includegraphics[width=.3\linewidth]{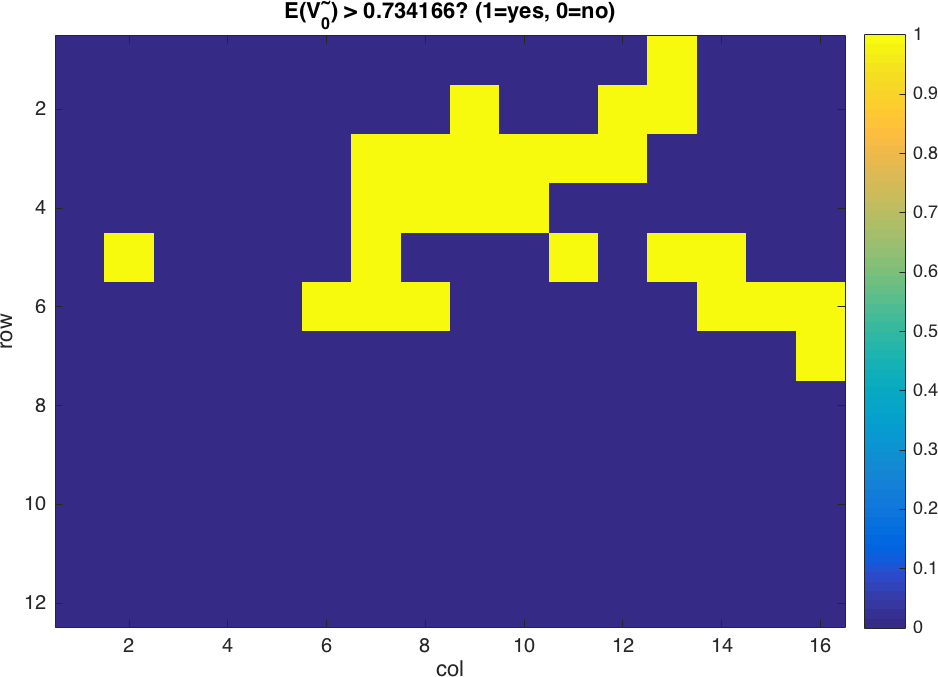}}
  \
  \subfloat[$PSNR < r_0$]{\label{fig: 07-psnr<}\includegraphics[width=.3\linewidth]{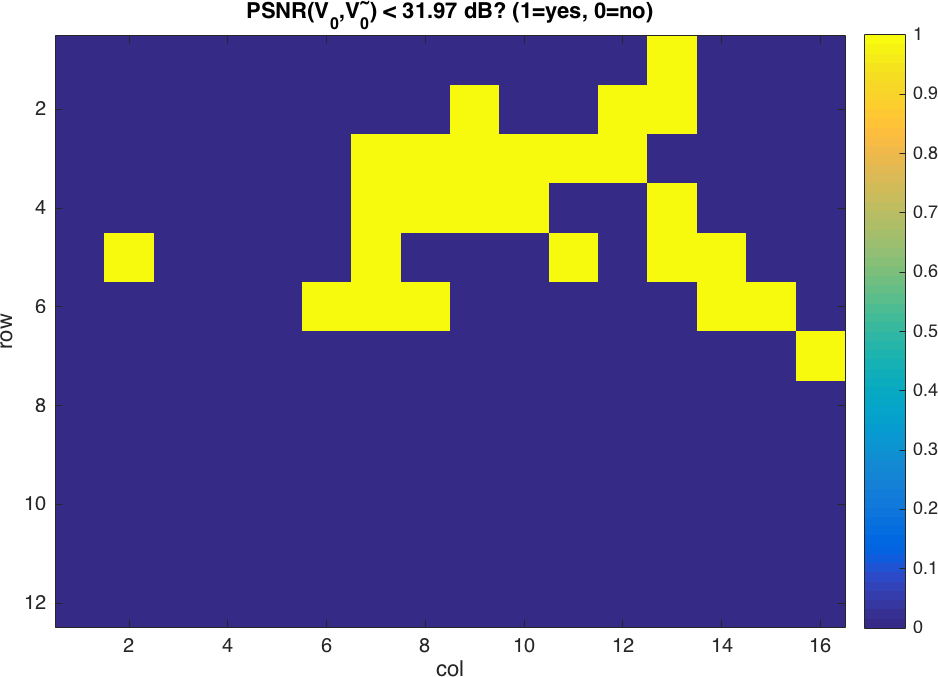}}
  \
  \subfloat[$H : \mathcal{E} > t_0 \Leftrightarrow PSNR < r_0$]{\label{fig: 07-fpfn}\includegraphics[width=.3\linewidth]{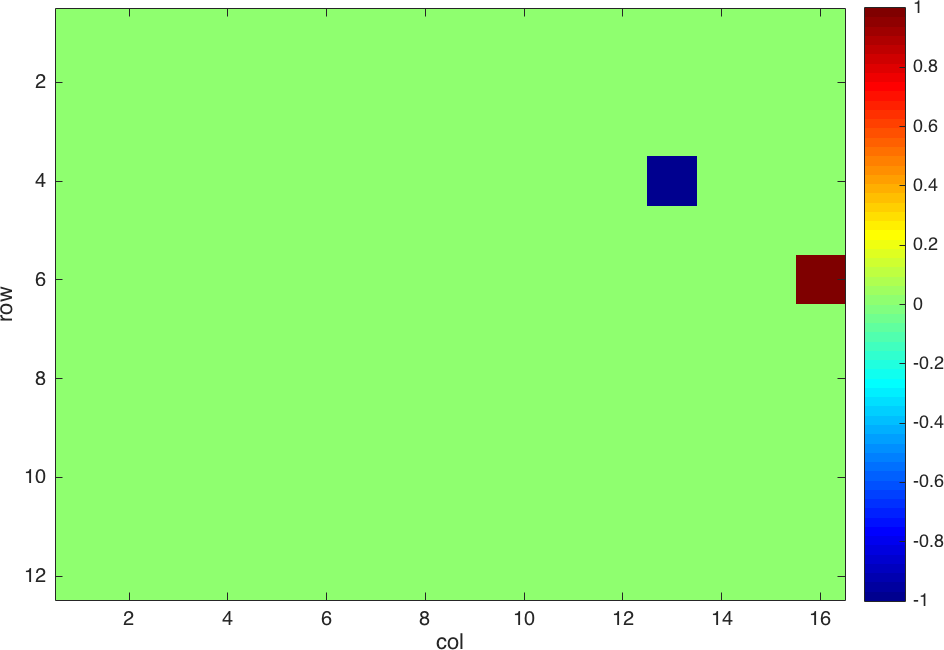}}
  \caption{Experimental treatment of an image. See the text for details.}
\label{fig: full figure treatment}
\end{figure}

In \cref{fig: full figure treatment} we show the full experimental treatment of \cref{fig: i07orig}, the luminance of figure I07.BMP in the {\em TID2008} database, as we described it in the beginning of this section. \cref{fig: i07rec} shows the reconstruction from BP after processing all 192 disjoint image blocks of size $32 \times 32$ that constitute the full image. \cref{fig: i07ssim} shows the {\em structural similarity map} (SSIM) of the reconstruction, where lighter pixels mean better reconstruction and darker worse, see \cite{WanBovSheSim2004}. \cref{fig: 07-E,fig: 07-psnr} show the values of the sparsity index $\mathcal{E}(\widetilde{\mathbf{V}}_k)$ and the peak signal-to-noise ratio $PSNR(\mathbf{V}_k,\widetilde{\mathbf{V}}_k)$ for each of the 192 image blocks, respectively. Notice, as we expected, the inverse relationship between both quantities. In \cref{fig: 07-E>} we have applied our threshold $t_0$ to the map shown in \cref{fig: 07-E}. The image shows in yellow the image blocks for which $\mathcal{E}(\widetilde{\mathbf{V}}_k) > t_0$ and in blue those for which the inequality doesn't hold. Similarly for \cref{fig: 07-psnr<}, we compare the values of the map in \cref{fig: 07-psnr} to our threshold $r_0$, except that in this case the yellow image blocks correspond to the cases when $PSNR(\mathbf{V}_k,\widetilde{\mathbf{V}}_k) < r_0$, and blue when this inequality doesn't hold. Finally, \cref{fig: 07-fpfn} shows in green the image blocks for which hypothesis $H$ is validated, in red those for which we have made a false positive, or error of type I, and in blue those for which we have made a false negative prediction, or error of type II. 

\begin{table}[tbhp]
{\footnotesize
  \caption{Number of false positive (type I) and false negative (type II) errors by image.} \label{tab: fpfn}
\begin{center}
  \begin{tabular}{|c|c|c|c|c|c|} \hline
   Image & \bf \# type I & \bf \# type II & \bf \# type I+II & PSNR (dB) & MSSIM \\ \hline
I01.BMP	&	4 (2.08\%)	&	16 (8.33\%)	&	20 (10.42\%)	& 30.153968 & 0.833950 \\
I02.BMP	&	0		&	4 (2.08\%)		&	4 (2.08\%)		& 36.415267 & 0.798317 \\
I03.BMP	&	2 (1.04\%)	&	4 (2.08\%)		&	6 (3.13\%)		& 38.186510 & 0.812061 \\
I04.BMP	&	2 (1.04\%)	&	0			&	2 (1.04\%)		& 39.694344 & 0.870662 \\
I05.BMP	&	6 (3.13\%)	&	15 (7.81\%)	&	21 (10.94\%)	& 29.734894 & 0.856745 \\
I06.BMP	&	3 (1.56\%)	&	12 (6.25\%)	&	15 (7.81\%)	& 30.984736 & 0.835197 \\
I07.BMP	&	1 (0.52\%)	&	1 (0.52\%)		&	2 (1.04\%)		& 35.966860 & 0.860078 \\
I08.BMP	&	9 (4.69\%)	&	15 (7.81\%)	&	24 (12.50\%)	& 29.623119 & 0.857145 \\
I09.BMP	&	1 (0.52\%)	&	4 (2.08\%)		&	5 (2.60\%)		& 35.340432 & 0.822734 \\
I10.BMP	&	2 (1.04\%)	&	5 (2.60\%)		&	7 (3.65\%)		& 35.416788 & 0.847086 \\
I11.BMP	&	2 (1.04\%)	&	11 (5.73\%)	&	13 (6.77\%)	& 31.287853 & 0.806722 \\
I12.BMP	&	2 (1.04\%)	&	2 (1.04\%)		&	4 (2.08\%)		& 36.634208 & 0.817618 \\
I13.BMP	&	4 (2.08\%)	&	7 (3.65\%)		&	11 (5.73\%)	& 26.320210 & 0.830284 \\
I14.BMP	&	3 (1.56\%)	&	17 (8.85\%)	&	20 (10.42\%)	& 31.304727 & 0.852535 \\
I15.BMP	&	3 (1.56\%)	&	5 (2.60\%)		&	8 (4.17\%)		& 36.433619 & 0.798157 \\
I16.BMP	&	2 (1.04\%)	&	1 (0.52\%)		&	3 (1.56\%)		& 36.064543 & 0.856455 \\
I17.BMP	&	0		&	0			&	0			& 34.165445 & 0.867691 \\
I18.BMP	&	2 (1.04\%)	&	5 (2.60\%)		&	7 (3.65\%)		& 30.604079 & 0.851522 \\
I19.BMP	&	3 (1.56\%)	&	15 (7.81\%)	&	18 (9.36\%)	& 33.746116 & 0.876263 \\
I20.BMP	&	1 (0.52\%)	&	4 (2.08\%)		&	5 (2.60\%)		& 34.260744 & 0.737556 \\
I21.BMP	&	2 (1.04\%)	&	6 (3.13\%)		&	8 (4.17\%)		& 30.994745 & 0.818269 \\
I22.BMP	&	5 (2.60\%)	&	11 (5.73\%)	&	16 (8.33\%)	& 33.095956 & 0.818456 \\
I23.BMP	&	1 (0.52\%)	&	0			&	1 (0.52\%)		& 37.067961 & 0.853895 \\
I24.BMP	&	2 (1.04\%)	&	8 (4.17\%)		&	10 (5.21\%)	& 32.519183 & 0.831541 \\ \hline
  \end{tabular}
\end{center}
}
\end{table}

\cref{tab: fpfn} summarizes the results for all 24 natural images in {\em TID2008}. In it, we present the number of false positive (type I) and false negative (type II) errors per image, as well as the total number of errors of either type, and the overall PSNR and MSSIM errors for each BP image reconstruction. These results suggest that, with high probability, hypothesis $H$, see \cref{eq: hypothesis}, is true.

It is worth noting that errors of type I are more benign than those of type II. This is because we are interested in finding out when our image block reconstruction is going to be below a certain quality threshold. That is, if for a given image block reconstruction $\widetilde{\mathbf{V}}_k$ we were to compute that $\mathcal{E}(\widetilde{\mathbf{V}}_k) > t_0$, under the assumption of the validity of hypothesis $H$, we would conclude that our reconstruction has a poor peak signal-to-noise ratio, i.e., $PSNR(\mathbf{V}_k,\widetilde{\mathbf{V}}_k) < r_0$. But, if we have made an error of type I, the image block reconstruction is in fact such that $PSNR(\mathbf{V}_k,\widetilde{\mathbf{V}}_k) \geq r_0$, i.e., better than predicted by $H$.

The opposite is true of errors of type II, which lead us to believe that we have done better than we actually did.

In \cref{fig: E vs PSNR} we show in two different formats the graph of the sparsity index $\mathcal{E}(\widetilde{\mathbf{V}}_k)$ versus the peak signal-to-noise ratio $PSNR(\mathbf{V}_k,\widetilde{\mathbf{V}}_k)$ for all 192 image blocks of each of the 24 natural images in {\em TID2008} that we studied, a total of 4608 image blocks. The dot at the intersection of the red, blue, and green regions in \cref{fig: EvsPSNR2} corresponds to the intersection of the horizontal and vertical lines with common point $(t_0,r_0)$ that we chose for our experiments, $t_0 = 0.734166$ and $r_0 = 31.970006 \text{ dB}$. If we number the quadrants that these two lines define, numbering them clockwise starting in the top right position, quadrant 1 contains all red dots for which we have a false positive, or error of type I, i.e., image blocks that have sparsity index $\mathcal{E}$ greater than $t_0$ but for which their PSNR is above $r_0$; quadrants 2 and 4, which contain all of the green dots that validate hypothesis $H$; and finally, quadrant 3 with the blue dots, which correspond to image blocks with errors of type II, namely, false negatives for which their sparsity index $\mathcal{E}$ is below $t_0$, yet their PSNR is below $r_0$.

In summary, in this work we have shown that by computing $\mathcal{E}(\widetilde{\mathbf{V}}_k)$ and comparing it to $t_0$, we can predict with a degree of certainty whether our compressed sensing image block reconstruction using BP will be satisfactory, or not. Therefore, for those image blocks for which $\mathcal{E}(\widetilde{\mathbf{V}}_k) > t_0$, we can decide to increase the number of measurements (samples) to improve their reconstruction, giving a predictive refinement methodology for compressed sensing imaging.

\begin{figure}[htp]
    \centering
    \subfloat[Molecular representation]{\label{fig: EvsPSNR}\includegraphics[width=0.49\textwidth]{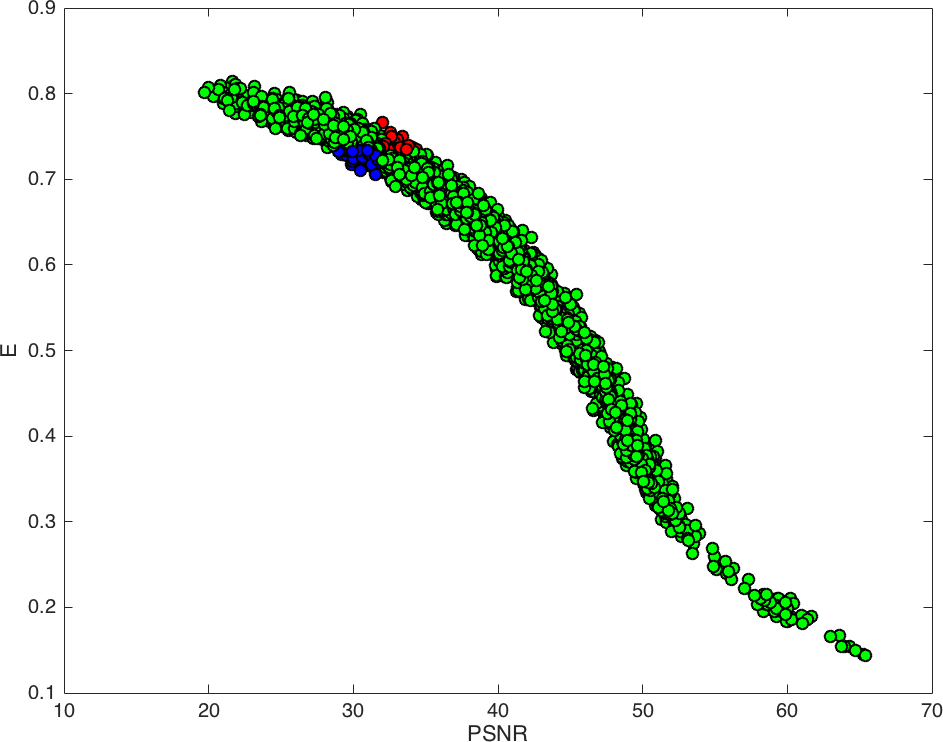}}
    \
    \subfloat[Density representation]{\label{fig: EvsPSNR2}\includegraphics[width=0.49\textwidth]{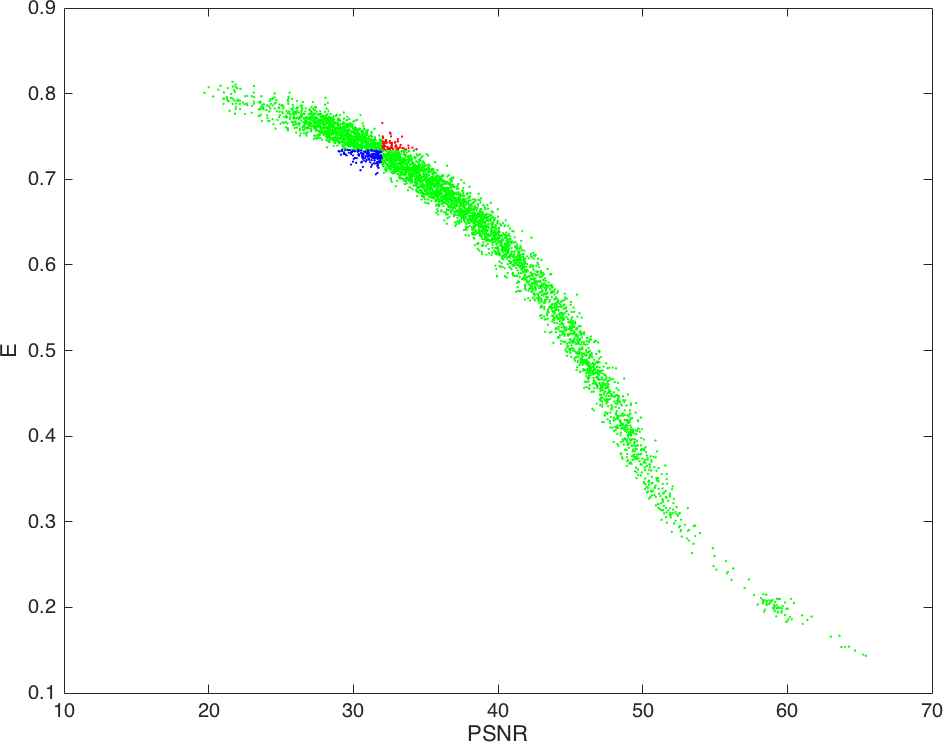}}
    \caption{$\mathcal{E}$ vs $PSNR$. Aggregate of all $32 \times 32$ image blocks of all images treated. The green dots correspond to image blocks for which hypothesis $H$ is valid, the red dots are image blocks with error of type I, and the blue dots are those with error of type II.}
    \label{fig: E vs PSNR}
\end{figure}

\appendix
\section{Sparsity, information, and the discrete cosine transform} \label{sec: raionale for E}
In this section we motivate the definition of the sparsity index $\mathcal{E}$, defined in \cref{sec: sparsity index}.

The main idea goes as follows. Suppose that we have a source of information, which is stochastic in nature, embodied in a vector $\mathbf{v} = (v_1, v_2, \ldots, v_n)^\text{T} \in \mathbb{R}^n$, where the $\{v_i\}$ are samples of some $\{V_i\}$ random variables, and suppose that we have a unitary transform $\mathbf{T} \in \mathbb{R}^{n \times n}$ such that if $\mathbf{x} = \mathbf{T}\mathbf{v}$, then the corresponding sample covariance matrix $C_\mathbf{x}$ is diagonal. Then, the sampled correlation coefficients $\rho_{x_i x_j}$ will be zero, and the linear portion of the mutual information of $x_i$ and $x_j$ will be zero as well---provided we can show a result that  links correlation and mutual information---in essence packing the linear portion of the mutual information into the individual components of $\mathbf{x}$. If we then quantify the {\em relevance} of the $\{x_i\}$ by way of the sparsity $s(\mathbf{x})$, we would have identified information with sparsity, which will allow us to say that the sparser $\mathbf{x}$ is, the less information it contains.  If you need a more concrete notion of what we mean by the {\em relevance} of a component $x_i$, we will say that $x_i$ is relevant if $|x_i| \geq 1$, and irrelevant if $|x_i| < 1$. Finally, if we had a result that said something about an invertible transform preserving mutual information, then we can also say something about the information content of $\mathbf{v}$ by way of the sparsity $s(\mathbf{x})$ of its transform.

We now review the concepts that we introduced above to link the notions of sparsity and information via the idea of {\em relevance} that is captured by our definition of sparsity $s$.

Let $X$ and $Y$ be two real discrete random variables. Let $X$ take on $n$ values, say $x_1, x_2, \ldots, x_n$ with respective probabilities $P_X(1), P_X(2), \ldots, P_X(n)$, and let $Y$ take on $m$ values $y_1, y_2, \ldots, y_m$ with respective probabilities $P_Y(1), P_Y(2), \ldots, P_Y(m)$. In general we assume that the variables $X$ and $Y$ are interdependent, and we denote $P_{XY}(i,j)$ the joint probability of $X$ and $Y$ taking on the values $x_i$ and $y_j$, respectively. With this setup the mutual information between $X$ and $Y$, measured in bits, is defined by Shannon \cite{shan1948} as
\begin{equation} \label{eq: mutual information}
I(X,Y) = \sum_{i=1}^n \sum_{j=1}^m P_{XY}(i,j) \log_2\left(\frac{P_{XY}(i,j)}{P_X(i)P_Y(j)}\right).
\end{equation}
This quantity, intuitively, measures how much information is gained of the value of one variable upon knowledge of the value of the other. Note that if $X$ and $Y$ are independent, then $I(X,Y) = 0$, since, by definition of independence, $P_{XY}(i,j) = P_X(i) P_Y(j)$.

Let $\mu_X = E[X] = \sum_{i=1}^n x_i P_X(i)$ and $\mu_Y = E[Y] = \sum_{j=1}^m y_j P_Y(j)$ denote the mean---also known as the expected value---of the random variables $X$ and $Y$, respectively. The covariance $cov(X,Y)$---also denoted as $\sigma_{XY}$---of $X$ and $Y$ is defined as
\begin{equation} \label{eq: covariance of two random variables}
\sigma_{XY} = E[(X - \mu_X)(Y - \mu_Y)] = E [XY] - \mu_X \mu_Y =
\left(\sum_{i=1}^n \sum_{j=1}^m x_i y_j P_{XY}(i,j)\right) - \mu_X \mu_Y.
\end{equation}

The correlation coefficient $\rho_{XY}$ of $X$ and $Y$ is defined as
\begin{equation} \label{eq: correlation coefficient}
\rho_{XY} = \frac{\sigma_{XY}}{\sigma_X \sigma_Y},
\end{equation}
whenever $\sigma_X := \sqrt{\sigma_{XX}}$ and $\sigma_Y := \sqrt{\sigma_{YY}}$, the standard deviations of $X$ and $Y$, respectively, are nonzero. The correlation coefficient measures the linear relationship between two random variables, with perfect linear increasing relationship if $\rho_{XY} = 1$, and perfect linear decreasing relationship if $\rho_{XY} = -1$. When $\rho_{XY} = 0$ we say that $X$ and $Y$ are uncorrelated. Note that if $X$ and $Y$ are independent then $\rho_{XY} = 0$, but the inverse is not true, as the correlation coefficient only measures linear dependence.

If $X$ and $Y$ are Gaussian random variables, then Gel'fand and Yaglom \cite{GelYag1957} prove that
\begin{equation} \label{eq: mutual information and correlation coefficient for two Gaussians}
I(X,Y) = -\frac{1}{2} \log_2\left(1 - \rho_{XY}^2\right).
\end{equation}
This is a remarkable result in that it links correlation with information, at least in the particular case of two Gaussian random variables. In this case note that if $X$ and $Y$ are uncorrelated, we don't gain any information about one variable by learning something of the other, which is consistent with how we intuitively think of correlation.

Before we proceed any further, note that the definition of mutual information in \cref{eq: mutual information} preserves its meaning if we change the real random variables $X$ and $Y$ for cartesian random vectors $\mathbf{x}$ and $\mathbf{y}$. With this generalization to higher dimensions of mutual information in mind, we cite another result in \cite{GelYag1957} relevant to our purposes.
\begin{theorem} \label{theo: mutual information and linear transforms}
Let $\mathbf{A} \in \mathbb{R}^{k \times k}$, and $\mathbf{x} \in \mathbb{R}^k$ be a random vector. Then
\begin{equation}
I(\mathbf{x},\mathbf{y}) \geq I(\mathbf{A}\mathbf{x},\mathbf{y})
\end{equation}
holds for any random vector $\mathbf{y}$, with equality if the matrix $\mathbf{A}$ is non-singular.
\end{theorem}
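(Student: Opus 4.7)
The plan is to recognize this as the data processing inequality specialized to a deterministic linear map: because $\mathbf{A}\mathbf{x}$ is a function of $\mathbf{x}$, moving from $\mathbf{x}$ to $\mathbf{A}\mathbf{x}$ can only destroy information about any third variable $\mathbf{y}$, and that destruction is reversible exactly when the map is injective. The setup is to write $\mathbf{z} = \mathbf{A}\mathbf{x}$ and observe that the joint PMF decomposes along fibers of $\mathbf{A}$:
\begin{equation*}
P_{\mathbf{z}\mathbf{y}}(\mathbf{c},\mathbf{d}) = \sum_{\mathbf{x}\,:\,\mathbf{A}\mathbf{x}=\mathbf{c}} P_{\mathbf{x}\mathbf{y}}(\mathbf{x},\mathbf{d}), \qquad P_{\mathbf{z}}(\mathbf{c}) = \sum_{\mathbf{x}\,:\,\mathbf{A}\mathbf{x}=\mathbf{c}} P_{\mathbf{x}}(\mathbf{x}),
\end{equation*}
while the marginal $P_{\mathbf{y}}$ is unchanged.

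For the inequality, I would apply the log-sum inequality term-by-term to each fiber. Fixing $\mathbf{c}$ and $\mathbf{d}$ and writing $F_{\mathbf{c}} = \{\mathbf{x}:\mathbf{A}\mathbf{x}=\mathbf{c}\}$, the log-sum inequality yields
\begin{equation*}
\Biggl(\sum_{\mathbf{x}\in F_{\mathbf{c}}} P_{\mathbf{x}\mathbf{y}}(\mathbf{x},\mathbf{d})\Biggr) \log_2 \frac{\sum_{\mathbf{x}\in F_{\mathbf{c}}} P_{\mathbf{x}\mathbf{y}}(\mathbf{x},\mathbf{d})}{\sum_{\mathbf{x}\in F_{\mathbf{c}}} P_{\mathbf{x}}(\mathbf{x})P_{\mathbf{y}}(\mathbf{d})} \leq \sum_{\mathbf{x}\in F_{\mathbf{c}}} P_{\mathbf{x}\mathbf{y}}(\mathbf{x},\mathbf{d}) \log_2 \frac{P_{\mathbf{x}\mathbf{y}}(\mathbf{x},\mathbf{d})}{P_{\mathbf{x}}(\mathbf{x})P_{\mathbf{y}}(\mathbf{d})},
\end{equation*}
and summing over all pairs $(\mathbf{c},\mathbf{d})$ gives precisely $I(\mathbf{A}\mathbf{x},\mathbf{y}) \leq I(\mathbf{x},\mathbf{y})$, using the generalization of \cref{eq: mutual information} to random vectors. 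An equivalent route, slightly slicker, is the chain rule $I(\mathbf{x},\mathbf{y}) = I(\mathbf{A}\mathbf{x},\mathbf{y}) + I(\mathbf{x},\mathbf{y}\mid \mathbf{A}\mathbf{x})$ together with nonnegativity of the conditional mutual information on the right, which reduces to a Jensen/Gibbs-type argument of the same flavor.

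For the equality case, when $\mathbf{A}$ is non-singular, $\mathbf{A}^{-1}$ exists and $\mathbf{x} = \mathbf{A}^{-1}(\mathbf{A}\mathbf{x})$ is itself a deterministic (linear) function of $\mathbf{A}\mathbf{x}$. Applying the inequality just established, with $\mathbf{A}\mathbf{x}$ playing the role of the input and $\mathbf{A}^{-1}$ playing the role of the matrix, gives $I(\mathbf{A}\mathbf{x},\mathbf{y}) \geq I(\mathbf{A}^{-1}(\mathbf{A}\mathbf{x}),\mathbf{y}) = I(\mathbf{x},\mathbf{y})$. Combining with the forward direction forces equality.

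The main obstacle I anticipate is bookkeeping rather than insight: verifying the log-sum step cleanly and tracking exactly which fibers are nontrivial (i.e.\ contain more than one $\mathbf{x}$) is precisely where the inequality can become strict, so care is needed to identify the non-singularity of $\mathbf{A}$ as the condition making every fiber a singleton and hence collapsing the log-sum inequalities to equalities. A secondary technical point is that the excerpt introduces mutual information only for discrete random vectors; if $\mathbf{x}$ is continuous, the same fiber-decomposition argument can be carried out with densities and the Radon--Nikodym form of the log-sum inequality, but for the purposes of this paper it suffices to remain in the discrete setting.
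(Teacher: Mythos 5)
Your argument is correct, but note that the paper itself does not prove this theorem at all: it is stated as a cited result of Gel'fand and Yaglom \cite{GelYag1957}, so there is no ``paper proof'' to compare against line by line. What you have supplied is a valid, self-contained derivation in the discrete setting: the fiber decomposition $P_{\mathbf{z}\mathbf{y}}(\mathbf{c},\mathbf{d}) = \sum_{\mathbf{x}\in F_{\mathbf{c}}} P_{\mathbf{x}\mathbf{y}}(\mathbf{x},\mathbf{d})$ is the right way to compare the two mutual informations, the log-sum inequality applied fiber by fiber (with $a_{\mathbf{x}} = P_{\mathbf{x}\mathbf{y}}(\mathbf{x},\mathbf{d})$ and $b_{\mathbf{x}} = P_{\mathbf{x}}(\mathbf{x})P_{\mathbf{y}}(\mathbf{d})$, whose fiber sum is exactly $P_{\mathbf{z}}(\mathbf{c})P_{\mathbf{y}}(\mathbf{d})$) gives the inequality, and the equality claim follows either from your reverse application with $\mathbf{A}^{-1}$ or, even more directly, from the observation that non-singularity makes every fiber a singleton so each log-sum step is already an equality. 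Since the theorem only asserts equality \emph{if} $\mathbf{A}$ is non-singular (not ``only if''), you do not need the finer analysis of when the log-sum inequality is tight. Your closing caveat is also the right one: the paper's \cref{eq: mutual information} is stated for discrete variables, whereas Gel'fand and Yaglom work with the general (supremum-over-quantizations) definition, under which the inequality follows because any quantization of $\mathbf{A}\mathbf{x}$ induces a coarser quantization of $\mathbf{x}$; your discrete argument is the natural specialization and is all the paper's framework supports.
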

In particular note that \cref{theo: mutual information and linear transforms} states that we don't win nor lose any information from applying an invertible linear transformation to a random vector.

Let $\mu_{\mathbf{x}} = E[\mathbf{x}]$ denote the mean of the random vector $\mathbf{x}$. The covariance matrix $C_\mathbf{x}$ of $\mathbf{x}$ is then defined as 
\begin{equation} \label{eq: covariance matrix}
C_\mathbf{x} = E\left[(\mathbf{x} - \mu_\mathbf{x})(\mathbf{x} - \mu_\mathbf{x})^\text{T}\right]
= E\left[\mathbf{x} \mathbf{x}^\text{T}\right] - \mu_\mathbf{x} \mu_\mathbf{x}^\text{T}.
\end{equation}
Note that $C_\mathbf{x} = C_\mathbf{x}^\text{T}$ is a symmetric matrix whose element at row $i$ and column $j$ corresponds to the covariance $\sigma_{X_i X_j}$ of the random variables $X_i$ and $X_j$, if we set $\mathbf{x} = (X_1, X_2, \ldots, X_k)^\text{T}$. As expected, the covariance matrix gives information of the linear relationships that may exist among the components $\{X_i\}$ of the random vector $\mathbf{x}$.


The discrete cosine transform (DCT) \cite{AhmNatRao1974}, also known as the DCT-II---the version that MATLAB implements---is one of several discrete cosine transforms \cite{RaoYip1990}. It is a unitary transform that asymptotically approximates the Karhunen--Lo\`eve transform (KLT) of the input signal, provided the signal can be modeled as a first-order Markov process. The KLT decorrelates a finite discrete signal optimally in the sense that the $MSE$ between the signal and its truncated representation in the KLT basis that eliminates all but the $m$ largest coefficients in absolute value, is minimal. Hence, the energy compaction performance of the DCT approaches that of the KLT when the input signal is as assumed above \cite{RaoYip1990,reed1993}.

The KLT for a given signal vector $\mathbf{x}$ is driven by the process of diagonalizing the covariance matrix $C_\mathbf{x}$, a procedure that in essence eliminates the linear correlation among the entries of $\mathbf{x}$, while preserving their mutual information since the KLT is invertible, see \cref{theo: mutual information and linear transforms}.

The magnitude of the KLT coefficients of a given signal can then be associated with how important the corresponding KL basis element is, how much weight it should be given in conveying information about the signal. The DCT-II being an approximation of the KLT, we can also think of the magnitude of the DCT-II coefficients as representing how much of the information in the signal they carry.

Therefore, under this setup, the sparsity index $\mathcal{E}$ can be interpreted as a measure of information by a count of the relevant coefficients that describe a signal, linking sparsity with information as Shannon defined it. This fulfills our goal to motivate the definition of $\mathcal{E}$.

\section{Convex minimization with linear constraints} \label{sec: convex minimization with constraints}
Given a vector $\mathbf{y} \in \mathbb{R}^m$, consider the problem,
\begin{equation} \label{eq: equality constrained minimization}
\min_{\mathbf{x} \in \mathbb{R}^n} f(\mathbf{x}) \quad \text{subject to } \mathbf{A}\mathbf{x} = \mathbf{y},
\end{equation}
where $f : \mathbb{R}^n \rightarrow \mathbb{R}$ is a convex and continuously differentiable function, and $\mathbf{A} \in \mathbb{R}^{m \times n}$ with $rank(\mathbf{A}) = m < n$.

\Cref{eq: equality constrained minimization} defines an {\em equality constrained minimization problem}, which has been studied extensively, see for example \cite{NasSof1996,BoyVan2004,GriNasSof2008}. Here, we present the approach that we have followed to solve \cref{eq: equality constrained minimization} to obtain the results in this work. 

\subsection{Constrained descent methods of minimization}
Constrained descent methods of minimization are also known as feasible descent direction methods. The main idea is that at a given point $\mathbf{x} \in S \subset \mathbb{R}^n$, we generate a feasible direction $\Delta\mathbf{x} \in \mathbb{R}^n$ where the objective function value $f(\mathbf{x})$ can be reduced. Here $S$ is the set of feasible points, i.e., the set of points that satisfy the problem constraints. We then use a line search to set $\mathbf{x} \gets \mathbf{x} + t_\star \Delta\mathbf{x}$ for some optimal step value $t_\star > 0$, and repeat the procedure until some convergence criterion is reached.

Recall that a direction $\Delta\mathbf{x} \in \mathbb{R}^n$ is a {\em feasible descent direction} at $\mathbf{x} \in S$ if there exists $\tilde{t} > 0$ such that,
\begin{equation} \label{eq: feasible descent direction}
f(\mathbf{x} + t \Delta\mathbf{x}) < f(\mathbf{x}), \text{ and } \mathbf{x} + t \Delta\mathbf{x} \in S \text{ for all } t \in (0, \tilde{t}\,].
\end{equation}
With these ideas in mind, the general constrained descent algorithm can be stated as,

\begin{algorithm}
  \caption{General constrained descent method.
    \label{alg: general constrained descent method}}
\begin{algorithmic}[0] 
\Require{A function $f : dom(f) \subseteq \mathbb{R}^n \rightarrow \mathbb{R}$, a constrain set $S \subset dom(f)$, and a starting point $\mathbf{x} \in S$.}
\Ensure{Local/global minimum of $f{\restriction_S}$.}
	\Statex
	\Repeat
		\State {\em Direction}. Find $\Delta\mathbf{x} \in \mathbb{R}^n$ such that for some $\tilde{t} > 0$ it holds that for all $0 < t \leq \tilde{t}$, $\mathbf{x} + t \Delta\mathbf{x} \in S$ and $f(\mathbf{x} + t \Delta\mathbf{x}) < f(\mathbf{x})$. 
		\State {\em Line search}. Find an optimal step $t_\star > 0$ such that $\mathbf{x} + t_\star \Delta\mathbf{x} \in S$.
		\State {\em Update}. $\mathbf{x} \gets \mathbf{x} + t_\star \Delta\mathbf{x}$.
	\Until{stopping criterion is satisfied.}
	\State \Return{$\mathbf{x}$}
\end{algorithmic}
\end{algorithm}

\subsubsection{Projected gradient descent method}
In the case of \cref{eq: equality constrained minimization}, \cref{alg: general constrained descent method} above gives rise to  the projected gradient descent method. Here, we have that the feasible set is $S = \{\mathbf{x} \in \mathbb{R}^n : \mathbf{A} \mathbf{x} = \mathbf{y} \}$. Given a feasible point $\mathbf{x} \in S$, we know that the direction of steepest descent is $-\nabla f(\mathbf{x})$, which may not be feasible. However, its projection into $S$ will be a feasible descent direction, provided it is not zero.
\begin{definition}[Projection]We say that a matrix $\Pi \in \mathbb{R}^{n \times n}$ is a projection if and only if $\Pi^2 = \Pi$.
\end{definition}
If $\mathbf{x} \in S$, then $\Delta\mathbf{x} \in \mathbb{R}^n$ is a feasible direction at $\mathbf{x}$ if and only if $\mathbf{A} \Delta\mathbf{x} = \mathbf{0}$, in other words, if $\Delta\mathbf{x} \in \ker(\mathbf{A}) := \{ \mathbf{v} \in \mathbb{R}^n : \mathbf{A} \mathbf{v} = \mathbf{0} \}$. To see this, simply note that $\mathbf{A}(\mathbf{x} + t \Delta\mathbf{x}) = \mathbf{A}\mathbf{x} + t \mathbf{A} \Delta\mathbf{x} = \mathbf{A}\mathbf{x} = \mathbf{b}$, for any $t \in \mathbb{R}$.

\begin{definition}[Local minimizer]
Let $f : \mathbb{R}^n \rightarrow \mathbb{R}$ be a continuously differentiable function and $S$ a set of feasible solutions. A vector $\mathbf{x}^\star \in \mathbb{R}^n$ is a local minimizer of $f$ if and only if for any feasible direction $\Delta\mathbf{x} \in \mathbb{R}^n$, with $\|\Delta\mathbf{x}\|_2 = 1$, there exists a number $\tilde{t} > 0$ such that
\begin{equation} \label{eq: local minimizer}
f(\mathbf{x}^\star + t \Delta\mathbf{x}) \geq f(\mathbf{x}^\star), \quad \text{for } 0 < t \leq \tilde{t}.
\end{equation}
\end{definition}

Given a feasible point $\mathbf{x}$, we would like to find a feasible direction $\Delta\mathbf{x}$ at $\mathbf{x}$ such that $f$ improves the most, i.e., we would like to find the solution to,
\begin{equation} \label{eq: optimal feasible descent direction}
\min_{\Delta\mathbf{x} \in \ker(\mathbf{A})} \| -\!\nabla f(\mathbf{x}) - \Delta\mathbf{x}\|_2.
\end{equation}
Denote by $\mathbf{Z}$ a matrix whose columns span $\ker(\mathbf{A})$. In that case, it is easy to see that all feasible directions $\Delta\mathbf{x}$ are of the form,
\begin{equation*}
\Delta\mathbf{x} = \mathbf{Z}\mathbf{q},
\end{equation*}
for some $\mathbf{q} \in \mathbb{R}^{n-m}$. Such matrix $\mathbf{Z}$ can be formed by using, for example, an LQ-decomposition of $\mathbf{A}$, or equivalently, a QR-decomposition of $\mathbf{A}^\text{T}$. In this latter case, we can write,
\begin{equation*} \label{eq: QR-decomposition of A transpose}
\mathbf{A}^\text{T} = \mathbf{Q}\mathbf{R},
\end{equation*}
where $\mathbf{Q} \in \mathbb{R}^{n \times n}$ is a unitary matrix, and $\mathbf{R} \in \mathbb{R}^{n \times m}$ is an upper trapezoidal matrix. It is easy to see that $\mathbf{Z} = (\mathbf{q}_i)_{i=m+1}^n$, where $\mathbf{q}_i$ represents the $i$th column vector of $\mathbf{Q}$. The projection of vector $\Delta\mathbf{x}$ into the null space of $\mathbf{A}$ can then be written as,
\begin{align*}
\Pi_\mathbf{Z} \Delta\mathbf{x} &= \sum_{i=m+1}^n \langle \Delta\mathbf{x},\mathbf{q}_i \rangle\, \mathbf{q}_i , \\ 
&= \sum_{i=m+1}^{n} (\mathbf{q}_i^\text{T} \Delta\mathbf{x})\, \mathbf{q}_i, \\
&= \mathbf{Z} \mathbf{Z}^\text{T} \Delta\mathbf{x}.
\end{align*}
We define, then, the projected gradient descent direction as,
\begin{equation} \label{eq: projected gradient descent direction}
\Delta \mathbf{x}_{pg} :=
-\Pi_\mathbf{Z} \nabla f(\mathbf{x}).
\end{equation}
It is easy to see that $\Delta\mathbf{x}_{pg}$ is a feasible descent direction that solves \cref{eq: optimal feasible descent direction}. We can also show, cf., \cite{BoyVan2004}, that $\Delta\mathbf{x}_{pg} = \mathbf{v}$, where $\mathbf{v} \in \mathbb{R}^n$ and $\mathbf{w} \in \mathbb{R}^m$ solve uniquely,
\begin{equation*}
\begin{pmatrix}
\mathbf{I} & \mathbf{A}^\text{T} \\
\mathbf{A} & \mathbf{0}
\end{pmatrix}
\begin{pmatrix}
\mathbf{v} \\
\mathbf{w}
\end{pmatrix}
=
\begin{pmatrix}
-\nabla f(\mathbf{x}) \\
\mathbf{0}
\end{pmatrix}.
\end{equation*}
Observe that $\mathbf{v}$ and $\mathbf{w}$ exist because $\mathbf{A}$ is a full rank matrix.

With $\Delta\mathbf{x}_{pg}$ as the feasible descent direction, we adapt \cref{alg: general constrained descent method} into the method shown in \cref{alg: projected gradient descent method for linear constraint}.

\begin{algorithm}
  \caption{Projected gradient descent method for a linear constraint. See \cref{eq: equality constrained minimization}.
    \label{alg: projected gradient descent method for linear constraint}}
\begin{algorithmic}[0] 
\Require{A function $f : dom(f) \subseteq \mathbb{R}^n \rightarrow \mathbb{R}$, a matrix $\mathbf{A} \in \mathbb{R}^{m \times n}$ with $rank(\mathbf{A}) = m < n$, a starting point $\mathbf{x}$ such that $\mathbf{A}\mathbf{x} = \mathbf{y}$, and two parameters $\alpha \in \left(0,\frac{1}{2}\right)$ and $\beta \in (0,1)$.}
\Ensure{Local/global minimum of $f{\restriction_S}$, where $S = \{ \mathbf{v} \in \mathbb{R}^n : \mathbf{A}\mathbf{v} = \mathbf{y}\}$.}
	\Statex
	\State Compute $\mathbf{Q}\mathbf{R} = \mathbf{A}^\text{T}$, and form $\mathbf{Z} = (\mathbf{q}_i)_{i=m+1}^n$, where $\mathbf{q}_i$ is the $i$th column of $\mathbf{Q}$.
	\Repeat
		\State {\em Direction.} $\Delta\mathbf{x}_{pg} \gets \mathbf{Z}\mathbf{Z}^\text{T}\nabla f(\mathbf{x})$.
		\Comment{Do $\mathbf{Z}\left(\mathbf{Z}^\text{T}\nabla f(\mathbf{x})\right)$ in $O(n^2)$ vs $(\mathbf{Z} \mathbf{Z}^\text{T})\nabla f(\mathbf{x})$ in $O(n^3)$.}
		\State {\em Line search}. $t_\star \gets \textsc{Step\_size}(f, \mathbf{x}, \Delta\mathbf{x}_{pg},\alpha,\beta)$.
		\Comment{See \cref{alg: backtracking line search}.}
		\State {\em Update}. $\mathbf{x} \gets \mathbf{x} + t_\star \Delta\mathbf{x}_{pg}$.
	\Until{stopping criterion is satisfied.}
	\State \Return{$\mathbf{x}$}
\end{algorithmic}
\end{algorithm}

\begin{algorithm}
  \caption{Backtracking line search, see \cite{BoyVan2004}.
    \label{alg: backtracking line search}}
\begin{algorithmic}[0] 
\Require{A descent direction $\Delta\mathbf{x}$ for $f$ at $\mathbf{x} \in dom(f)$, and two parameters $\alpha \in \left(0,\frac{1}{2}\right)$ and $\beta \in (0,1)$.}
\Ensure{Step size $t$ such that $f(\mathbf{x} + t\Delta\mathbf{x}) < f(\mathbf{x})$.}
	\Statex
	\Function{Step\_size}{$f, \mathbf{x}, \Delta\mathbf{x}, \alpha, \beta$}
	\State $t \gets 1$
	\While{$f(\mathbf{x} + t\Delta\mathbf{x}) \geq f(\mathbf{x}) + \alpha t \nabla f(\mathbf{x})^\text{T}\Delta\mathbf{x}$} \Comment{$\alpha$ typically between 0.01 and 0.3}
		\State $t \gets \beta t$ \Comment{$\beta$ typically between 0.1 and 0.8}
	\EndWhile
	\State \Return{$t$}
	\EndFunction
\end{algorithmic}
\end{algorithm}


\section*{Acknowledgements}
I would like to thank Professor John J.\ Benedetto at the University of Maryland, College Park (UMD), and Dr.\ Mark Magsino, a Ph.D. student of his during the preparation of this manuscript, for their insightful and helpful comments. I would also like to thank the Institute for Physical Science and Technology at UMD for giving me the freedom to pursue this research on top of my day-to-day obligations. I want to acknowledge as well Dr. David Bowen, from the Laboratory for Physical Sciences at UMD, for the productive discussions and demonstrations that we had with him. The preparation of this manuscript is based upon work supported by the U. S. Army Research Office under grant number W911NF-17-1-0014.

\bibliographystyle{siamplain}
\bibliography{JBbib}

 \end{document}